\newtheorem{theorem}{Theorem}
\newtheorem{proposition}{Proposition}
\newtheorem{remark}{Remark}
\numberwithin{equation}{section}
\theoremstyle{plain}
\begin{document}

\begin{frontmatter}

\title{\large Regression with genuinely functional errors-in-covariates}

\runtitle{Functional Measurement Error}

\begin{aug}
\author{\fnms{Anirvan} \snm{Chakraborty}\ead[label=e1]{anirvan.chakraborty@epfl.ch}} \and
\author{\fnms{Victor M.} \snm{Panaretos}\ead[label=e2]{victor.panaretos@epfl.ch}}

\runauthor{A. Chakraborty \& V.M. Panaretos}

\affiliation{Ecole Polytechnique F\'ed\'erale de Lausanne}

\address{Institut de Math\'ematiques\\
Ecole Polytechnique F\'ed\'erale de Lausanne\\
\printead{e1}, \printead*{e2}}

\end{aug}

\begin{abstract} The contamination of covariates by measurement error is a classical and well-understood problem in multivariate regression, where it is well known that failing to account for this contamination can result in substantial bias in the parameter estimators. The nature and degree of this effect on statistical inference is also understood to crucially depend on the specific distributional properties of the measurement error in question. When dealing with functional covariates, measurement error has thus far been modelled as additive white noise over the observation grid. Such a setting implicitly assumes that the error arises purely at the discrete sampling stage, otherwise the model specification can only be viewed in a weak (stochastic differential equation) sense, white noise not being a second-order process. Departing from this simple distributional setting can have serious consequences for inference, similarly to the multivariate case, and current methodology will break down under such departures.  In this paper, we consider the situation where the additive measurement error is allowed to be a valid stochastic process.  We propose a novel estimator of the slope parameter in a functional linear model, for scalar as well as functional responses, in the presence of such a general specification of additive measurement error in the covariate. The proposed estimator is inspired by the multivariate regression calibration approach, but hinges on recent advances on matrix completion methods for functional data in order to handle the nontrivial (and unknown) measurement error covariance structure. The asymptotic properties of the proposed estimators are derived. We probe the performance of the proposed estimator of slope by means of numerical experiments and observe that it substantially improves upon the spectral truncation estimator based on the erroneous observations, i.e., ignoring measurement error.  We also investigate the behaviour of the estimators on a real dataset on hip and knee angle curves during a gait cycle.
\end{abstract}

\begin{keyword}[class=AMS]
\kwd[Primary ]{62M, 15A99}
\kwd[; secondary ]{62M15, 60G17}
\end{keyword}

\begin{keyword}
\kwd{grid sub-sampling}
\kwd{rank}
\kwd{regression calibration}
\kwd{slope parameter}
\end{keyword}

\end{frontmatter}

\tableofcontents

\newpage

\section{Introduction} \label{sec1}

\subsection*{Background}

Measurement error at the level of a covariate can adversely affect the quality of a regression estimator, if not properly accounted for, and this problem is by now well-studied and understood in the context of multivariate designs (see \cite{Full87} and \cite{CRSC06} for detailed expositions). The difficulties associated with measurement error are typically subtle, and can introduce identifiability issues, which may partly explain why \cite{CRSC06} refer to the inferential/statistical problems arising due to measurement errors in covariates as a \emph{double/triple whammy} of measurement error. Specifically, for multivariate linear regression, it is well known that classical additive measurement error typically results in bias in the estimator of slope in the direction of zero -- a phenomenon called attenuation to the null. The subtlety arises from the well-documented fact that  the specific distribution of the measurement error determines the effects it will have, and thus one has to properly account for this error distribution and devise appropriate methods (see p. 41 and also Ch. 3 of \cite{CRSC06}). 

\indent Functional regression considers stochastic processes as covariates, and is a contemporary topic with considerable activity (see \cite{marx1999}, \cite{Jame02}, \cite{RS05}, \cite{yao2005}, \cite{HH07}, \cite{CCKS07}, \cite{RO07}, \cite{CKS09}, \cite{YC10}, \cite{GBCCR11}, \cite{CM13}, \cite{ISSG15} to mention only a few). We refer to \cite{Morr15} and \cite{RGSO16} for informative and extensive reviews on such functional linear models. Some of these papers assume that the functional covariates are fully observed in the continuum and without any error, while others do indeed consider functional covariates measured on a grid and contaminated by measurement errors: the functional covariate $X$ is observed on a grid, say, $a \leq t_{1} < t_{2} < \ldots < t_{L} \leq b$, and the measurements obtained are $X_{j} = X(t_{j}) + \epsilon_{j}$. The measurement errors $\epsilon_{j}$ are then usually assumed i.i.d. zero mean random variables with common variance $\sigma^{2}$, i.e., the error structure is a classical, additive and homoscedastic one. 

This assumption allows one to circumvent measurement error via smoothing techniques, and two of the most prominent such approaches involve: (a) obtaining an estimate of the uncorrupted $X(\cdot)$ by spline smoothing of its corrupted version \cite{RS05}, and (b) smoothing the covariance of the corrupted observations $X_{j}$'s and then using this to compute finite rank approximations of the true $X$ using best linear prediction (see, e.g., \cite{YMW05}, \cite{hall2006properties}). Approach (a) is used when we observe functional data on a dense grid, while approach (b) is to be preferred if the observations are sparse, e.g., in the case of longitudinal covariates. These estimates may then used to carry out further statistical analysis. \cite{Jame02} considered natural cubic spline smoothing of the observed covariate $X_{j}$'s and imposed distributional assumptions to derive a maximum likelihood estimator of the discrete version of the parameters in generalized linear models. This can thus be considered as an adaptation of the likelihood-based approach for measurement error models (see, e.g., Ch. 8 in \cite{CRSC06}). \cite{yao2005} used the smoothed estimates of the covariance of $X$ developed in \cite{YMW05} to study estimators of slope parameter based on functional principal components for linear regression with longitudinal covariates. In their paper on a penalization criterion based approach for estimating the slope parameter, \cite{ISSG15} advocate the use of the general approaches (a) or (b), depending on the denseness/sparsity of the observation grid, when the covariates are measured with error. A different method for dealing with measurement error was studied by \cite{CCKS07}, who extended the total least squares approach (see, e.g., \cite{GVL80}, \cite{VHV91}) for multivariate error-in-variable model to the functional setting. \cite{CKS09} studied an estimator of the slope function based on smoothing splines after correcting for measurement errors using an approach similar to the total least squares method in \cite{CCKS07}. A two-step approach has been considered by \cite{GBCCR11}, where the authors first estimate $X(\cdot)$, then express these estimates and the slope parameter using a common spline basis, and then use a mixed effects model to estimate the slope parameter. These authors propose to estimate $X(\cdot)$ by first finding a smooth estimator of its covariance using a method in \cite{DCCP09} (which is similar to the approach of \cite{YMW05}) followed by spectral projection onto the first few components. \cite{GWC11} considered a variational Bayes approach for functional linear regression, which can be considered as an extension of the Bayesian method for measurement error models with multivariate data (see, e.g., Ch. 9 in \cite{CRSC06}). What is common to all these approaches is that they assume an i.i.d. structure for the measurement error, and this ingredient is crucial to their success.

An exception is a the recent Ph.D. thesis \citep{Cai2015}, in which the author studies an extension of the well-known SIMEX (simulation extrapolation, see, e.g., \cite{CS94} and Ch. 5 in \cite{CRSC06}) technique for multivariate regression with measurement errors to the functional setting with scalar response. Here the author \emph{does} also consider the case when the measurement errors may be heteroscedastic and/or autocorrelated, which appears to be the first such attempt in the literature. However, \cite{Cai2015} imposes a parametric specification of the error covariance, and, furthermore, the method and the asymptotic studies are only in the \emph{fixed} discretized grid setting. This effectively reduces the problem to the classical context of a multivariate version with parametrically specified measurement error covariance. A genuinely functional setup would need to consider a nonparametrically specified measurement error covariance structure and allow for theory and good practical performance as the observation grid becomes dense. 
 \\

\subsection*{Our Setup and Contributions}

Our purpose is to introduce the first genuinely functional method for handling regression with covariate contamination by measurement error that is not white noise. By `genuinely functional' we mean that the measurement error covariance should be nonparametric and that the observation grid should not be restricted to a finite size. Particularly in the functional data setting, it is natural to expect additional (dependence) structure in the measurement error. The error variances may not only be heteroscedastic, but there may well be propagation of error, so that errors associated with adjacent observation points need not be uncorrelated. One can then propose the more general contamination model $W(\cdot) = X(\cdot) + U(\cdot)$, where $U(\cdot)$ is a measurement error stochastic process that is a valid random element in the underlying function space, possessing a trace-class covariance, and $X$ and $U$ are uncorrelated. We call this a \emph{functional measurement error}, to be contrasted with the white noise setup where $U$ would be a generalised function interpretable only in weak SDE sense. 

Such a general specification of error inevitably leads to identifiability issues which cannot be alleviated without additional restrictions (see, e.g., \cite{OWY01}), not unlike the double/triple whammy mentioned earlier. This is perhaps a key reason why an i.i.d. error structure is typically assumed in FDA. Nevertheless, recent work by \cite{DP16} established a key nonparametric identifiability result that we will rely on, proving that the covariance operators $\mathscr{K}_{X}$ and $\mathscr{K}_{U}$ of $X$ and $U$, respectively, are identifiable provided the covariance kernel of $X$ is real analytic and that of $U$ is banded, i.e., there exists a $\delta > 0$ (called a \emph{bandwidth}) such that $\mathrm{cov}(U(t),U(s)) = 0$ if $|s-t| > \delta$ and $\mathscr{K}_{X}$ is analytic on an open set containing $|s-t| \leq \delta$. The required analyticity of the covariance kernel of $X$ is obviously ensured if $X$ is a process with a finite Karhunen-Lo\`eve expansion, i.e., it has finite rank, and its eigenfunctions are real analytic. Assuming a banded structure of $\mathscr{K}_{U}$ and analyticity of the eigenfunctions of $\mathscr{K}_{X}$ essentially translates to asking that the variability of the measurement error process is at a scale \emph{purely finer} than the scale of variability of the true covariate. 

\indent Given this identifiability result, we are naturally led to consider the following setup if we are to entertain a genuinely functional non-white measurement error framework. The true regression model is 
$$y = \alpha + \mathscr{B}X + \epsilon,$$ 
where $y$ is either a random scalar or function, $\mathscr{B}$ is a bounded linear operator from the underlying function space to the space of the response variable $y$, and $\epsilon$ is the error term. The true covariate $X$ is of some rank $r<\infty$, which beyond being finite can otherwise be unrestrictedly large. Instead of $X$, we only get to observe $W=X+U$, i.e. a version corrupted by a functional measurement error process $U$ with a banded but otherwise nonparametric covariance structure, as described in the previous paragraph. 

In the context of this model, propose a novel estimator of the slope parameter $\mathscr{B}$ in functional linear model, when the erroneous covariate $W$ is observed on a discrete grid of points. The estimator is motivate by the regression calibration approach (see, e.g., Ch. 4 in \cite{CRSC06}) for measurement error models in the multivariate setting. As part of this approach, one needs to construct a consistent estimator of the of the generalised inverse of the covariance operator of the true covariate $X$, on the basis of observing $W$, which is a severely ill-posed problem in itself (both due to the measurement error, as well as due to having to estimate an inverse) and constitutes the core challenge that one has to overcome. The crucial insight that allows us to overcome this challenge, is to use low-rank matrix completion to separate $X$ from $W$ (\`a la \cite{DP16}) combined with a novel grid-subsampling approach to estimate the true rank $r$ and allow for consistent inversion.

Our estimator is defined and studied in Section \ref{sec2} for both the scalar and the functional response case. In either case, the proposed estimators are shown to be consistent, and their rates of convergence are derived. Perhaps surprisingly, it is shown that $n^{-1/2}$ rates are entirely feasible (as the grid becomes dense), despite the convoluted nature of the problem. We discuss the practical implementation of the proposed estimators in Section \ref{subsec2-1}. In Section \ref{fqr}, we extend our methodology and asymptotic results to the case of quadratic functional regression with measurement error in the covariate. Simulation studies are conducted in Section \ref{sec3}, where the proposed estimator is compared with the spectral truncation estimator (see \cite{HH07}) constructed using the erroneous covariate. In subsection \ref{iid-errors}, we conduct simulations in the setting when the measurement errors are indeed i.i.d. over the observation grid and compare the proposed estimator with some of the other procedures that are designed for this setup. In Section \ref{sec4}, we discuss the situation when the true covariate $X$ is not truly finite rank but is essentially so, i.e., its eigenvalues decay fast but are not exactly eventually null. Here, we slightly modify the estimator of the covariance of $X$, but show that performance does not suffer. Finally, we illustrate the proposed estimator using a real data set  on hip and knee angle measurements in Section \ref{sec5}.

\section{Methodology, Theoretical Properties and Implementation} \label{sec2}

As discussed in the Introduction, the true regression model is taken to be
$$y = \alpha + \mathscr{B}X + \epsilon,$$
but we only get to observe the contaminated covariate $W = X + U$ instead of $X$ itself. Throughout this paper, we will assume that $E(X) = E(U) = 0$ for simplicity. We will also assume that $E(\epsilon \mid W) = E(\epsilon \mid X) = E(\epsilon \mid U) = 0$. Note that the assumption $E(\epsilon \mid W) = 0$ is automatically guaranteed if we assume that the measurement error is \emph{non-differential} (see, e.g., Sec. 2.5 in \cite{CRSC06}), i.e., the conditional distribution of $y$ given $(X,W)$ is the same as that of $y$ given $X$. This is in turn guaranteed if we assume that the measurement error $\epsilon$ in the response is independent of the measurement error $U$ in the covariate conditional on the true covariate $X$. Under non-differential measurement error, it follows that 
\begin{eqnarray*}
E(\epsilon \mid W) &=& E(y - \alpha - \mathscr{B}X \mid W) = E[E(y - \alpha - \mathscr{B}X \mid X,W) \mid W]\\
& =& E[E(y \mid X,W) - \alpha - \mathscr{B}X \mid W] = E[\{E(y \mid X) - \alpha - \mathscr{B}X\} \mid W] \\
&=& 0
\end{eqnarray*}
since the term inside the braces in the last expectation is zero. Consequently, 
$$E(y \mid W) = E(\alpha + \mathscr{B}X + \epsilon \mid W) = \alpha + \mathscr{B}E(X \mid W),$$ 
meaning that the regression of $y$ on $W$ is the same as that of $y$ on $E(X \mid W)$, i.e., one can fit the linear regression $y = \alpha + \mathscr{B}E(X \mid W) + \epsilon'$  instead of the model $y = \alpha + \mathscr{B}X + \epsilon$. These arguments provide the  justification for using of the regression calibration approach (see, e.g., Ch. 4 in \cite{CRSC06}). 

\subsection{Scalar-on-function model} \label{sec2-1}
\indent We first consider the case when the response is scalar. In this situation, the Riesz representation theorem implies that $$E(y \mid X) = \alpha + \langle X,\beta\rangle$$ for a unique $\beta$ (see, e.g., \cite{HE15}). 
In the multivariate setting, with random vectors $X$ and $U$ possessing full rank covariances, the least squares solution of $\beta$ using the regression calibration method is given by the solution of the equation $\mathrm{var}(E(X \mid W))\beta = \mathrm{cov}(y,E(X \mid W))$. Recall that $E(X \mid W) = K_{X}K_{W}^{-1}W$ and $\mathrm{var}(E(X \mid W)) = K_{X}K_{W}^{-1}K_{X}$, where $K_{W}$ and $K_{X}$ are the covariance matrices of $W$ and $X$, respectively. Straightforward algebra shows that the least squares solution is $$K_{X}^{-1}\mathrm{cov}(y,W).$$ 

It is thus observed that estimation of the slope necessitates the construction of a suitable estimator of the (generalised) inverse of the covariance of $X$. In the functional measurement error model proposed in the Introduction of this paper, it is assumed that the true covariate $X$ has finite rank $r$, i.e., the rank of $\mathscr{K}_{X} = r$. Even if we had a consistent estimator $\widehat{\mathscr{T}}$ of $\mathscr{K}_{X}$, this would not immediately guarantee that $\widehat{\mathscr{T}}^{-}$ would be a consistent estimator of  $\mathscr{K}_{X}^{-}$ (where for any compact self-adjoint operator $\mathscr{T}$, $\mathscr{T}^{-}$ denotes its Moore-Penrose generalized inverse; see, e.g., pp. 106-107 in \cite{HE15}). 
The map $\mathscr{T} \mapsto \mathscr{T}^{-}$ is continuous on the space of finite rank self-adjoint operators \emph{if and only if} for any sequence $\mathscr{T}_{n} \rightarrow \mathscr{T}$, we have $\mbox{rank}(\mathscr{T}_{n}) = \mbox{rank}(\mathscr{T})$ for all sufficiently large $n$. This says consistent estimation of $\mathscr{K}_{X}$ itself does not suffice, and we must be able to additionally consistently estimate the rank of $\mathscr{K}_{X}$ in order to be able to consistently estimate $\mathscr{K}_{X}^{-}$ accurately: the rank of the estimator $\widehat{\mathscr{T}}$ of $\mathscr{K}_{X}$ must accurately estimate the rank of $\mathscr{K}_{X}$. Note that despite $\mathscr{K}_X$ being finite rank, determining this finite rank is highly non-trivial in the presence of measurement error, since the potentially infinite rank of the measurement error process is confounded with the finite rank of the true covariate (in the absence of measurement error, estimating the rank would be a trivial problem once the number of observations and grid size exceeded the true rank $r$).

With these requirements in mind, we now develop our methodology.  Suppose that we observe each $W_{i}$, $1 \leq i \leq n$, over a grid of points $0 \leq t_{1} < t_{2} < \ldots, t_{L} \leq 1$, where $t_{j} \in I_{j,L}$ for each $j=1,2,\ldots,L$ with $\{I_{j,L} : 1 \leq j \leq L\}$ being a partition of $[0,1]$ into intervals of length $1/L$. We assume the grid nodes to be random. Define the discretely observed covariate vector as ${\bf W}_{i,L} = (W_{i}(t_{1}),W_{i}(t_{2}),\ldots,W_{i}(t_{L}))'$ for $1 \leq i \leq n$, and define the unobservable vectors ${\bf X}_{i,L}$'s and ${\bf U}_{i,L}$'s analogously. In this setting, \cite{DP16} proposed a consistent estimator of $\mathscr{K}_{X}$ which is defined as follows. Let $K_{X}$ be the covariance matrix of ${\bf X}_{1,L}$. The estimator $\widehat{K}_{X}$ of $K_{X}$ is obtained by minimizing 
\begin{equation}\label{DPestimator}
|||P_{L} \circ (\widehat{K}_{W} - \Theta)|||_{F}^{2} + \tau_{n}\mbox{rank}(\Theta).
\end{equation} 
Here, $\Theta$ ranges over the set of $L \times L$ positive definite matrices; $|||\cdot|||_{F}$ denotes the Frobenius norm of a matrix; $\widehat{K}_{W}$ is the empirical covariance matrix of the ${\bf W}_{i,L}$'s; $P_{L}$ is a matrix with $(i,j)$th entry $\mathbf{1}(|i-j| > \lceil L/4 \rceil)$; `$\circ$' denotes the element-wise (Hadamard) matrix product; and $\tau_{n}>0$ is a tuning parameter. The estimator of ${\mathscr{K}}_{X}$ is then defined as the integral operator associated with the kernel $\sum_{i=1}^{L}\sum_{j=1}^{L} \widehat{K}_{X}(i,j)\mathbf{1}\{(s,t) \in I_{i,L} \times I_{j,L}\}, s,t \in [0,1]$. In plain words, one estimates the covariance of $\bf X$ by a low-rank completion of the band-deleted covariance of $\bf W$ (the band deletion in principle decontaminating from the measurement error process). Under suitable conditions, this estimator can be shown to be consistent. However, as is observed from the asymptotic study in Theorem 3 in \cite{DP16}, one cannot guarantee that \emph{both} the estimator itself \emph{and} its rank will be consistent for their population counterparts (in fact, for dense grids,  $\mbox{rank}(\widehat{\mathscr{K}}_{X})$ may be inconsistent), which implies that we cannot naively use the same estimator in our context, as its generalised inverse may dramatically fail to be consistent for the true generalised inverse (as per the earlier discussion). 

 Since it is imperative that the rank be estimated as accurately as the operator itself in our case,  we introduce a modified two-step estimation procedure, estimating the rank separately from the operator itself: (1) In the first step, we minimise \eqref{DPestimator} on a subset of the grid evaluation points with the purpose of estimating the rank and not the covariance itself; (2) in the second step, we minimise $|||P_{L} \circ (\widehat{K}_{W} - \Theta)|||_{F}^{2}$ on the entire grid, over matrices $\Theta=\theta\theta'$, with $\theta\in\mathbb{R}^{L\times \hat{r}}$, where $\hat r$ comes from step (1). To explain the logic behind this strategy, we make the following technical observations: 
\begin{enumerate}
\item It follows from Theorem 2 in \cite{DP16} that the covariance matrices of ${\bf X}_{1,L}$ and ${\bf U}_{1,L}$ themselves (not just the continuum kernels they arise from) are identifiable provided that the grid size $L$ exceeds the critical value $4(r + 1)$ and that each of the $L$ grid evaluation points is sampled from a continuous distribution supported on its corresponding partition element $I_{j,L}$. Call such a grid an \emph{adequate grid}.

\item Therefore, the estimator $\widehat{r}_{L_{*}}$ of $r$ obtained by calculating the rank of the solution of  \eqref{DPestimator} on \textit{any} adequate grid of fixed size $L_*$ is consistent as $n \rightarrow \infty$ under the assumption that $\tau_{n} \rightarrow \infty$. Note that this would not yield a consistent estimator of $\mathscr{K}_{X}$ itself, though, since that would require the grid size to grow to infinity as well. See Theorem 3 in \cite{DP16}.

\item So if we can subsample our original grid to obtain an \emph{adequate} subgrid and keep the size of this subgrid fixed as $n\rightarrow\infty$, we will be able to estimate the rank consistently. See Section \ref{subsec2-1} for more details on how to choose this adequate grid. 

\item Now reasoning behind the two-step method becomes clear: we use the subgrid of fixed size $L_*$ in order to get a consistent estimator of the rank $\widehat{r}_{L_{*}}$, and we use the complete grid (whose size $L$ is in principle growing as $n$ grows) to obtain a consistent estimator of $\mathscr{K}_X$.
\end{enumerate}

In summary, the proposed estimation procedure is as follows.
\begin{itemize}
\item[Step 1:] Extract an \emph{adequate} subgrid of resolution $L_{*}<L$ as described above to get an estimator $\widehat{r}_{L_{*}}$ of the rank $r$.  Specifically, $\widehat{r}_{L_{*}}$ is defined as the rank of the minimizer of \eqref{DPestimator} based on the subsampled grid of size $L_*$. 

\item[Step 2:] Next, use the full grid of resolution $L$ to find an intermediate estimator of $K_{X}$ with rank equal to $\widehat{r}_{L_{*}}$ (obtained in Step 1) as follows:
\begin{equation}\label{modified}
 \widehat{K}_{X} := \mbox{arg}\min_{\Theta : \mbox{rank}(\Theta) = \widehat{r}_{L_{*}}} ||P_{L} \circ (\widehat{K}_{W} - \Theta)|||_{F}^{2},
 \end{equation}
where $\Theta$ is a $L \times L$ positive definite matrix, and the $(i,j)$th element of $P_{L}$ equals $\mathbf{1}(|i - j| > \lceil L/4\rceil)$. Construct the estimator  $\widehat{k}_{X}$ of the kernel $k_X$ of $\mathscr{K}_X$ as follows:
$$ \widehat{k}_{X}(s,t) = \sum_{i=1}^{L}\sum_{j=1}^{L} \widehat{K}_{X}(i,j)\mathbf{1}\{(s,t) \in I_{i,L} \times I_{j,L}\}, ~s,t \in [0,1].$$
Let $\widehat{\lambda}_{j}$ and $\widehat{\eta}_{j}$ denote the eigenvalues and the eigenfunctions of the integral operator associated with the kernel $\widehat{k}_{X}(s,t)$.

\item[Step 3:] Define the estimator $\widehat{\mathscr{K}}_X$ of $\mathscr{K}_{X}$ as
$$ \widehat{\mathscr{K}}_{X} = \sum_{j=1}^{\widehat{r}_{L_{*}}} \widehat{\lambda}_{j} (\widehat{\eta}_{j} \otimes \widehat{\eta}_{j}),$$
and use its Moore-Penrose inverse 
$$ \widehat{\mathscr{K}}_{X}^{-} = \sum_{j=1}^{\widehat{r}_{L_{*}}} \widehat{\lambda}_{j}^{-1} (\widehat{\eta}_{j} \otimes \widehat{\eta}_{j})$$
as the estimator of $\mathscr{K}_X^{-}$.
\end{itemize}

\begin{remark}
Strictly speaking, we should speak of \emph{a} minimiser rather than \emph{the} minimiser of \eqref{DPestimator} or of \eqref{modified}, since the minima of these objectives may not be unique. However, it can be shown that for all $n$ sufficiently large both objectives \emph{will} have a unique minimum, which is why we do not insist on making the pedantic distinction.\\
\end{remark}

\noindent The regression calibration estimator of the slope parameter $\beta$ in the functional linear regression setting is now defined as
$$\widehat{\beta}_{rc} = \widehat{\mathscr{K}}_{X}^{-}\widehat{C}_{y,W},$$
where $\widehat{C}_{y,W}$ is the empirical covariance between the $y_{i}$'s and the $W_{i}$'s.\\

\noindent The following theorem provides the asymptotic behaviours of $\widehat{\mathscr{K}}_{X}$, $\widehat{\mathscr{K}}_{X}^{-}$ and $\widehat{\beta}_{rc}$.\\

\begin{theorem}  \label{thm1}
Suppose that $E(||X||^{4}) < \infty$, $E(||U||^{4}) < \infty$, $E(\epsilon^{2}) < \infty$, $\delta < 1/4$, and let $L_{*} \geq 4(r+1)$ be a fixed integer. Suppose that $\tau_{n} \rightarrow 0$, $n\tau_{n} \rightarrow \infty$ and $L^{-2} = O(n^{-1})$ as $n \rightarrow \infty$. Then, the following hold as $n \rightarrow \infty$. \\
\begin{itemize}
\item[(a)] $P(\widehat{r}_{L_{*}} = r) \rightarrow 1$. \\
\item[(b)] $|||\widehat{\mathscr{K}}_{X}^{-} - \mathscr{K}_{X}^{-}|||_{HS}=O_{P}(n^{-1/2})$ and $|||\widehat{\mathscr{K}}_{X} - \mathscr{K}_{X}|||_{HS}=O_{P}(n^{-1/2})$, where $|||\cdot|||_{HS}$ is the Hilbert-Schmidt norm. \\
\item[(c)] $||\widehat{\beta}_{rc} - \beta|| = O_{P}(n^{-1/2})$. 
\end{itemize}
\end{theorem}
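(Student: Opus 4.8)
The plan is to establish the three parts in the order (a), (b), (c), since the rate statements in (b) and (c) are most naturally proved on the event $\{\widehat{r}_{L_{*}} = r\}$, whose probability is controlled in (a). The conceptual engine behind the whole argument is that once the rank is correctly identified, the problem collapses to a \emph{well-conditioned finite-dimensional} one, and this is precisely what makes the parametric $n^{-1/2}$ rate attainable despite the ostensibly ill-posed inversion. For part (a), I would first invoke Theorem 2 in \cite{DP16}: the subgrid is adequate (its size $L_{*} \geq 4(r+1)$ and the nodes are sampled from continuous distributions on the partition cells), so the population band-deleted completion problem restricted to the subgrid has a unique solution of rank exactly $r$. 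Since the subgrid is of fixed size, the empirical covariance satisfies $\widehat{K}_{W} - K_{W} = O_{P}(n^{-1/2})$ entrywise (using $E(\|X\|^{4}) < \infty$ and $E(\|U\|^{4}) < \infty$), so the data-fit term in \eqref{DPestimator} fluctuates at order $O_{P}(n^{-1})$. The two tuning conditions then play against each other: $n\tau_{n} \to \infty$ forces the rank penalty to dominate this fluctuation, ruling out overestimation of the rank, while $\tau_{n} \to 0$ rules out underestimation. Carrying this out by comparing the objective value at the true-rank population solution against any competing rank yields $P(\widehat{r}_{L_{*}} = r) \to 1$, exactly the fixed-grid specialization of the rank-consistency mechanism in Theorem 3 of \cite{DP16}.

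For part (b) I would work throughout on $\{\widehat{r}_{L_{*}} = r\}$, the complement being harmless for an $O_{P}$ statement. On this event $\widehat{K}_{X}$ from \eqref{modified} is a rank-$r$ completion of the band-deleted empirical covariance on the full grid, and I would bound $|||\widehat{\mathscr{K}}_{X} - \mathscr{K}_{X}|||_{HS}$ by three contributions: (i) the stochastic error $|||\widehat{K}_{W} - K_{W}|||_{F} = O_{P}(n^{-1/2})$; (ii) the discretization error from replacing the real-analytic kernel $k_{X}$ by its piecewise-constant approximation, which is $O(L^{-1})$ and hence $O(n^{-1/2})$ under $L^{-2} = O(n^{-1})$; and (iii) the completion error. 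The observation that makes (iii) accessible is that on the retained off-band entries one has $P_{L} \circ K_{W} = P_{L} \circ K_{X}$, because $\delta < 1/4$ places the support of $K_{U}$ strictly inside the deleted band for all large $L$ (indices with $|i-j| > \lceil L/4 \rceil$ correspond to $|t_{i}-t_{j}|$ exceeding $\delta$). For the inverse, on $\{\widehat{r}_{L_{*}} = r\}$ both $\widehat{\mathscr{K}}_{X}$ and $\mathscr{K}_{X}$ have rank $r$, and the smallest nonzero eigenvalue $\lambda_{r}$ of $\mathscr{K}_{X}$ is a fixed positive constant; standard perturbation theory for the Moore--Penrose inverse of fixed-rank self-adjoint operators then gives $|||\widehat{\mathscr{K}}_{X}^{-} - \mathscr{K}_{X}^{-}|||_{HS} \leq C(\lambda_{r})\,|||\widehat{\mathscr{K}}_{X} - \mathscr{K}_{X}|||_{HS}$ on the probability-one-in-the-limit event $\{|||\widehat{\mathscr{K}}_{X} - \mathscr{K}_{X}|||_{HS} < \lambda_{r}/2\}$, delivering both displays in (b).

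For part (c) I would first record that $\beta = \mathscr{K}_{X}^{-}C_{y,W}$: since $\epsilon$ is mean-zero given $W$ and $X \perp U$, one gets $C_{y,W} = \mathscr{K}_{X}\beta$, so $\mathscr{K}_{X}^{-}C_{y,W} = \beta$ for $\beta \in \mathrm{range}(\mathscr{K}_{X})$. Then I would decompose $\widehat{\beta}_{rc} - \beta = (\widehat{\mathscr{K}}_{X}^{-} - \mathscr{K}_{X}^{-})\widehat{C}_{y,W} + \mathscr{K}_{X}^{-}(\widehat{C}_{y,W} - C_{y,W})$. The first term is $O_{P}(n^{-1/2})$ by part (b) together with $\|\widehat{C}_{y,W}\| = O_{P}(1)$, and the second is $|||\mathscr{K}_{X}^{-}|||_{op}\,\|\widehat{C}_{y,W} - C_{y,W}\| = O_{P}(n^{-1/2})$, where the cross-covariance rate follows from a central-limit argument under $E(\epsilon^{2}) < \infty$ and the fourth-moment assumptions, with discretization error again $O(L^{-1}) = O(n^{-1/2})$.

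The hard part will be step (iii) of part (b): upgrading the merely qualitative identifiability of the band-deleted rank-$r$ completion (Theorem 2 of \cite{DP16} asserts injectivity of the completion map, nothing quantitative) into a \emph{stable recovery} statement with a Lipschitz constant that does not deteriorate as $L \to \infty$. I would attack this by using the real-analyticity of the eigenfunctions of $\mathscr{K}_{X}$ to lower-bound, uniformly in $L$, the restricted singular values of the operator mapping a rank-$r$ kernel to its off-band grid values, thereby converting identifiability into a uniform conditioning bound. Matching this $L$-uniform stability against the prescribed scaling $L^{-2} = O(n^{-1})$ — which is exactly the threshold making the deterministic discretization bias no larger than the stochastic $O_{P}(n^{-1/2})$ term — is what ultimately reconciles the severely ill-posed appearance of the inversion with the parametric rate.
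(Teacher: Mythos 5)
Your architecture coincides with the paper's at almost every point: part (a) rests on the rank-consistency mechanism of Theorem 3 of \cite{DP16} applied on the fixed-size adequate subgrid (the paper simply cites the bound $|\widehat{r}_{L_{*}} - r| = O_{P}((n\tau_{n})^{-1})$ rather than re-deriving the penalty-versus-fluctuation argument); part (b) is proved on the event $\{\widehat{r}_{L_{*}} = r\}$, whose complement is negligible by (a), by first controlling the rank-$r$ estimator and then transferring to the Moore--Penrose inverse; and part (c) uses exactly the paper's decomposition $\widehat{\beta}_{rc} - \beta = (\widehat{\mathscr{K}}_{X}^{-} - \mathscr{K}_{X}^{-})\widehat{C}_{y,W} + \mathscr{K}_{X}^{-}(\widehat{C}_{y,W} - \mathrm{cov}(y,W))$ together with a Hilbert-space CLT for $\widehat{C}_{y,W}$. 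Two of your local choices are even slightly cleaner: your observation that $P_{L} \circ K_{W} = P_{L} \circ K_{X}$ when $\delta < 1/4$ is what the paper uses to get $S_{n,L}'(K_{X}) = 0$, and your Wedin-type bound for the generalized inverse avoids the paper's term-by-term eigenvalue/eigenfunction expansion, which needs the spectral gap $a_{r} = \min\{\lambda_{r}, \lambda_{1}-\lambda_{2},\ldots,\lambda_{r-1}-\lambda_{r}\}$ and hence implicitly simple eigenvalues.

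The genuine divergence, and the gap, is the core rate step of part (b), which you yourself flag as ``the hard part'' and leave as a plan. The paper handles it by M-estimation theory: it defines $\mathbb{S}_{n,L}(\Theta) = L^{-2}\|P_{L}\circ(\widehat{K}_{W}-\Theta)\|_{F}^{2}$ and $S_{n,L}(\Theta) = L^{-2}\|P_{L}\circ(K_{W}-\Theta)\|_{F}^{2}$ over rank-$r$ matrices, notes that $K_{X}$ is the a.s.-unique zero of $S_{n,L}$ (Proposition 2 of \cite{DP16}), bounds the centered empirical modulus by $E\sup_{d_{n}(\Theta,K_{X})<\gamma}|D(\Theta)| \leq 2\gamma\sqrt{C/n}$, and invokes Theorem 3.4.1 of \cite{VW96} to conclude $L^{-1}\|\widetilde{K}_{X}-K_{X}\|_{F} = O_{P}(n^{-1/2})$ uniformly in $L$, after which $L^{-2}=O(n^{-1})$ absorbs the discretization error exactly as in your step (ii). Your alternative --- a uniform-in-$L$ lower bound on the restricted singular values of the map from rank-$r$ kernels to their off-band grid values --- is never executed, so as submitted the rate in (b), and hence (c), is not proven. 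That said, your diagnosis of the crux is accurate in a way worth recording: the van der Vaart--Wellner rate theorem requires a curvature hypothesis, namely $S_{n,L}(\Theta) - S_{n,L}(K_{X}) \geq c\, d_{n}^{2}(\Theta, K_{X})$ with $c$ uniform in $L$ near $K_{X}$, and the paper only records uniqueness of the minimizer and the upper bound $|\Delta(\Theta)| \leq \gamma^{2}$, which points the wrong way. So the stable-recovery statement you isolate is not an artifact of your route; it is the same quantitative ingredient the paper's invocation of the rate theorem tacitly presupposes, and completing your analyticity-based argument for it would close the step in either version of the proof.
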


\begin{remark} \label{rem1}
Part (b) of Theorem \ref{thm1} shows that by utilizing the grid subsampling technique, we have been able to achieve parametric rates of convergence for the estimator of the covariance operator \emph{and} its generalised inverse. Part (a) shows that we have achieved consistency of the estimator of the rank even when the grid size is very large compared to $n$. This is unlike what is obtained in \cite{DP16}. Under the condition $L^{-2} = O(n^{-1})$, which is also required to ensure parametric rate of convergence of their estimator of the true covariance, it is unknown whether their estimator of rank would even be consistent. This is because the consistency of the estimator of rank in \cite{DP16} is ensured provided the exactly opposite condition holds, namely, $L^{-2}/n^{-1} \rightarrow \infty$, which implies that $L^{2}$ does not grow any faster than $n$.
\end{remark} 
\begin{proof}[Proof of Theorem \ref{thm1}]
(a) It follows from Theorem 3 in \cite{DP16} that $|\widehat{r}_{L_{*}} - r| = O_{P}((n\tau_{n})^{-1})$. Since $n\tau_{n} \rightarrow \infty$, this implies that $\widehat{r}_{L_{*}}$ converges in probability to $r$ as $n \rightarrow \infty$. Hence, for each $\epsilon > 0$, we have $P(|\widehat{r}_{L_{*}} - r| \leq \epsilon) \rightarrow 1$ as $n \rightarrow \infty$. Since both $\widehat{r}_{L_{*}}$ and $r$ takes integer values, part (a) of this theorem follows upon choosing any $\epsilon < 1$.\\
(b) Fix any $M > 0$,
\begin{eqnarray}
&& P\left(n^{1/2}|||\widehat{\mathscr{K}}_{X}^{-} - \mathscr{K}_{X}^{-}|||_{HS} > M\right) \nonumber \\
&=& P\left(n^{1/2}|||\widehat{\mathscr{K}}_{X}^{-} - \mathscr{K}_{X}^{-}|||_{HS} > M, \widehat{r}_{L_{*}} = r\right) + P\left(n^{1/2}|||\widehat{\mathscr{K}}_{X}^{-} - \mathscr{K}_{X}^{-}|||_{HS} > M, \widehat{r}_{L_{*}} \neq r\right) \nonumber \\
&\leq& P\left(n^{1/2}|||\widetilde{\mathscr{K}}_{X}^{-} - \mathscr{K}_{X}^{-}|||_{HS} > M, \widehat{r}_{L_{*}} = r\right) + P\left(\widehat{r}_{L_{*}} \neq r\right) \nonumber \\
&\leq& P\left(n^{1/2}|||\widetilde{\mathscr{K}}_{X}^{-} - \mathscr{K}_{X}^{-}|||_{HS} > M\right) + P\left(\widehat{r}_{L_{*}} \neq r\right), \label{eq1}
\end{eqnarray}
where $\widetilde{\mathscr{K}}_{X}$ is the estimator obtained in Steps 2 and 3 of the previous algorithm when the chosen rank is equal to the true rank $r$. Denote the matrix form of the minimizer in this situation by $\widetilde{K}_{X}$, i.e., 
$$ \widetilde{K}_{X} = \mbox{arg}\min_{\Theta : \mbox{rank}(\Theta) = r} ||P_{L} \circ (\widehat{K}_{W} - \Theta)|||_{F}^{2}.$$
Let us denote the eigenvalues and the eigenfunctions of $\widetilde{\mathscr{K}}_{X}$ by $\widetilde{\lambda}_{j}$'s and $\widetilde{\eta}_{j}$'s for $j=1,2,\ldots,r$. \\
\indent In order to show that $|||\widehat{\mathscr{K}}_{X}^{-} - \mathscr{K}_{X}^{-}|||_{HS} = O_{P}(n^{-1/2})$, it is enough to show that the first probability in the right hand side of \eqref{eq1} is small if we choose a large enough $M$, i.e., $|||\widetilde{\mathscr{K}}_{X}^{-} - \mathscr{K}_{X}^{-}|||_{HS} = O_{P}(n^{-1/2})$. This is because the second probability in \eqref{eq1}, namely, $P\left(\widehat{r}_{L_{*}} \neq r\right)$ converges to zero by part (a) of the theorem. Now, observe that
\begin{eqnarray}
|||\widetilde{\mathscr{K}}_{X}^{-} - \mathscr{K}_{X}^{-}|||_{HS}&\leq& \sum_{j=1}^{r} |||\widetilde{\lambda}_{j}^{-1} (\widetilde{\eta}_{j} \otimes \widetilde{\eta}_{j}) - \lambda_{j}^{-1} (\eta_{j} \otimes \eta_{j})|||_{HS} \nonumber \\
&\leq& \sum_{j=1}^{r} |\widetilde{\lambda}_{j}^{-1} - \lambda_{j}^{-1}| + \sum_{j=1}^{r} \lambda_{j}^{-1} |||(\widetilde{\eta}_{j} \otimes \widetilde{\eta}_{j}) - (\eta_{j} \otimes \eta_{j})|||_{HS} \nonumber \\
&\leq& \sum_{j=1}^{r} \widetilde{\lambda}_{j}^{-1}\lambda_{j}^{-1} |\widetilde{\lambda}_{j} - \lambda_{j}| + 2\sum_{j=1}^{r} \lambda_{j}^{-1} ||\widetilde{\eta}_{j} - \eta_{j}|| \nonumber \\
&\leq& \widetilde{\lambda}_{r}^{-1}\lambda_{r}^{-1} \max_{1 \leq j \leq r} |\widetilde{\lambda}_{j} - \lambda_{j}| + 2\lambda_{r}^{-1} \max_{1 \leq j \leq r} ||\widetilde{\eta}_{j} - \eta_{j}|| \nonumber \\
&\leq& \widetilde{\lambda}_{r}^{-1}\lambda_{r}^{-1} |||\widetilde{\mathscr{K}}_{X} - \mathscr{K}_{X}|||_{HS} + 4\sqrt{2}\lambda_{r}^{-1}a_{r}^{-1} |||\widetilde{\mathscr{K}}_{X} - \mathscr{K}_{X}|||_{HS}, \label{eq2}
\end{eqnarray}
where the last inequality follows using standard results in perturbation theory of operators (see, e.g., \cite{HE15}) and $a_{r} = \min\{\lambda_{r}, (\lambda_{1} - \lambda_{2}), \ldots, (\lambda_{r-1} - \lambda_{r})\}$. So, if we can show that $|||\widetilde{\mathscr{K}}_{X} - \mathscr{K}_{X}|||_{HS} = O_{P}(n^{-1/2})$, then using the fact that $\max_{1 \leq j \leq r} |\widehat{\lambda}_{j} - \lambda_{j}| \leq |||\widetilde{\mathscr{K}}_{X} - \mathscr{K}_{X}|||_{HS} = O_{P}(n^{-1/2})$ along with \eqref{eq2}, it will follow that $|||\widetilde{\mathscr{K}}_{X}^{-} - \mathscr{K}_{X}^{-}|||_{HS} = O_{P}(n^{-1/2})$. Consequently, from \eqref{eq1}, we will get that $|||\widehat{\mathscr{K}}_{X}^{-} - \mathscr{K}_{X}^{-}|||_{HS} = O_{P}(n^{-1/2})$.   \\
\indent We will proceed along similar lines as the proof of Theorem 3 in \cite{DP16}. Then, it follows that we only need to show that $L^{-2}||\widetilde{K}_{X} - K_{X}||_{F}^{2} = O_{P}(n^{-1})$. Define the functionals
\begin{eqnarray}
\mathbb{S}_{n,L}(\Theta) = L^{-2}||P_{L} \circ (\widehat{K}_{W} - \Theta)||_{F}^{2} \ \ \ \mbox{and} \ \ \ S_{n,L}(\Theta) = L^{-2}||P_{L} \circ (K_{W} - \Theta)||_{F}^{2},  \label{eq3}
\end{eqnarray}
where $K_{W}$ is the $L \times L$ covariance matrix of ${\bf W}_{1,L}$. Also, define $d_{n}(\Theta_{1},\Theta_{2}) = L^{-1}||\Theta_{1} - \Theta_{2}||_{F}$. This distance as well as the functionals in \eqref{eq3} are defined over the space of $L \times L$ matrices of rank equal to $r$. \\
\indent First, observe that it follows from Proposition 2 in \cite{DP16} that $K_{X}$ is the unique minimizer of $S_{n,L}(\cdot)$ almost surely over the grid provided that $L \geq 4r + 4$. Also, this minimal value is zero. Let $\gamma > 0$ and consider the following Taylor expansion:
\begin{eqnarray}
\Delta(\Theta) = S_{n,L}(\Theta)- S_{n,L}(K_{X}) = \langle S_{n,L}'(K_{X}),(\Theta - K_{X})\rangle_{F} + \frac{1}{2}\langle S_{n,L}''(\Theta_{*})(\Theta - K_{X}),(\Theta - K_{X})\rangle_{F}, \label{eq4}
\end{eqnarray}
where $\Theta_{*} = \alpha\Theta + (1-\alpha)K_{X}$ for some $\alpha \in [0,1]$. Note that $S_{n,L}'(\widecheck{\Theta}) = -2L^{-2}P_{L} \circ (K_{W} - \widecheck{\Theta})$ and $S_{n,L}''(\widetilde{\Theta})\widecheck{\Theta} = -2L^{-2}P_{L} \circ \widecheck{\Theta}$, where $\widetilde{\Theta}$ and $\widecheck{\Theta}$ are arbitrary $L \times L$ matrices of rank $r$. Observe that $S_{n,L}'(K_{X}) = 0$. So, \eqref{eq4} yields
\begin{eqnarray}
|\Delta(\Theta)| &=& L^{-2}\langle P_{L} \circ (\Theta - K_{X}),(\Theta - K_{X})\rangle_{F} \nonumber \\
&\leq& L^{-2} ||P_{L} \circ (\Theta - K_{X})||_{F}||\Theta - K_{X}||_{F} \ \leq \ L^{-2} ||\Theta - K_{X}||_{F}^{2}, \nonumber \\
\Longrightarrow && \sup_{\Theta : \mbox{rank}(\Theta) = r, d_{n}(\Theta,K_{X}) < \gamma} |\Delta(\Theta)| \ \leq \ \gamma^{2}. \label{eq5} 
\end{eqnarray}
Next, define $D(\Theta) = \mathbb{S}_{n,L}(\Theta)- S_{n,L}(\Theta) - \mathbb{S}_{n,L}(K_{X}) + S_{n,L}(K_{X})$. A first order Taylor expansion yields the following simplification of $D(\Theta)$ with $\Theta_{**} = \beta\Theta + (1-\beta)K_{X}$ for some $\beta \in [0,1]$. 
\begin{eqnarray}
|D(\Theta)| &=& |\langle \mathbb{S}_{n,L}'(\Theta_{**}),(\Theta - K_{X})\rangle_{F} - \langle S_{n,L}'(\Theta_{**}),(\Theta - K_{X})\rangle_{F}| \nonumber \\
&=& 2L^{-2}|\langle P_{L} \circ (\widehat{K}_{W} - \Theta_{**}), (\Theta - K_{X})\rangle_{F} - \langle P_{L} \circ (K_{W} - \Theta_{**}), (\Theta - K_{X})\rangle_{F}| \nonumber \\
&=& 2L^{-2}|\langle P_{L} \circ (\widehat{K}_{W} - K_{W}), (\Theta - K_{X})\rangle_{F}| \nonumber \\
&\leq& 2L^{-2}||\widehat{K}_{W} - K_{W}||_{F}||\Theta - K_{X}||_{F}. \nonumber \\
\Longrightarrow && \sup_{\Theta : \mbox{rank}(\Theta) = r, d_{n}(\Theta,K_{X}) < \gamma} |D(\Theta)| \ \leq \  2{\gamma}L^{-1}||\widehat{K}_{W} - K_{W}||_{F}.   \label{eq6}
\end{eqnarray}
Next, it can be shown that $E(L^{-2}||\widehat{K}_{W} - K_{W}||_{F}^{2}) \leq Cn^{-1}$ for a constant $C = \sup_{s,t \in [0,1]^{2}} \mathrm{var}[W(s)W(t)]$, and the finiteness of this constant is a consequence of the assumption $E(||W||^{4}) < \infty$. Thus,
\begin{eqnarray}
E\left\{\sup_{\Theta : \mbox{rank}(\Theta) = r, d_{n}(\Theta,K_{X}) < \gamma} |D(\Theta)|\right\} \leq 2{\gamma}\sqrt{C/n}. \label{eq7}
\end{eqnarray}
It now follows from Theorem 3.4.1 in \cite{VW96} that the minimizer $\widetilde{K}_{X}$ of $\mathbb{S}_{n,L}(\Theta)$  satisfies 
\begin{eqnarray}
nd_{n}^{2}(\widetilde{K}_{X},K_{X}) &=& O_{P}(1)  \ \ \Longrightarrow \ \  L^{-1}||\widetilde{K}_{X} - K_{X}||_{F} = O_{P}(n^{-1/2}) \label{eq8}
\end{eqnarray}
as $n \rightarrow \infty$, where the $O_{P}(1)$ term is uniform in $L$. This and the fact that $L^{-2} = O(n^{-1})$  completes the proof of the fact that $|||\widetilde{\mathscr{K}}_{X} - \mathscr{K}_{X}|||_{HS} = O_{P}(n^{-1/2})$. Hence, from the arguments given towards the beginning, if follows that both of $|||\widehat{\mathscr{K}}_{X}^{-} - \mathscr{K}_{X}^{-}|||_{HS}$ and $|||\widehat{\mathscr{K}}_{X} - \mathscr{K}_{X}|||_{HS}$ are $O_{P}(n^{-1/2})$ as $n \rightarrow \infty$.  \\
\indent For proving the second part of (b) of this theorem, note that $\widehat{C}_{y,W} = n^{-1} \sum_{i=1}^{n} y_{i}W_{i} - \bar{y}\bar{W}$. It now follows from the central limit theorem in separable Hilbert spaces (see, e.g., \cite{Bosq00}) that $||\widehat{C}_{y,W} - \mathrm{cov}(y,W)|| = O_{P}(n^{-1/2})$ as $n \rightarrow \infty$ from the weak law of large numbers in a separable Hilbert space (see, e.g., \cite{Bosq00}). Now, recalling that $\beta = \mathscr{K}_{X}^{-}\mathrm{cov}(y,W)$ from the discussion in the Introduction, and using the earlier statement and the first part of (b) of this theorem, we get
\begin{eqnarray*}
||\widehat{\beta}_{rc} - \beta|| &\leq& |||\widehat{\mathscr{K}}_{X}^{-} - \mathscr{K}_{X}^{-}|||_{HS}||\widehat{C}_{y,W}|| + |||\mathscr{K}_{X}^{-}|||_{HS}||\widehat{C}_{y,W} - \mathrm{cov}(y,W)|| \ = \ O_{P}(n^{-1/2})
\end{eqnarray*}
as $n \rightarrow \infty$. This completes the proof of the theorem.
\end{proof}

\subsection{Function-on-function model} \label{sec2-2}
\indent We now consider the case when the response variable $y$ is also functional. In this case, the true function-on-function linear regression is given by $y = \alpha + \mathscr{B}X + \epsilon$, where $\epsilon$ is now a random element in the underlying separable Hilbert space, and $\mathscr{B}$ is the unknown (bounded and linear) slope operator. Otherwise, we still only get to observe the corrupted proxy $W=X+U$ instead of $X$ itself, with $X$ and $U$ satisfying the same assumptions as before. If $\mathscr{K}_X$ were known, the least squares estimator of $\mathscr{B}$ would be given by the solution of the equation $\mathscr{B}\mathscr{K}_{X} = \mathrm{cov}(y,X)$, which equals $\mathrm{cov}(y,X)\mathscr{K}_{X}^{-}$ with $\mathrm{cov}(y,X) = E(y \otimes X)$. This is due to the identifiability constraint that $\mathscr{B} \in span\{(\eta_{i} \otimes \eta_{j}) : 1 \leq i,j \leq r\}$ so that $\mathscr{B}\mathscr{P}_{X} = \mathscr{P}_{X}\mathscr{B} = \mathscr{B}$, where $\mathscr{P}_{X}$ is the projection operator onto $span\{\eta_{j} : 1\leq j \leq r\}$. Also note that $\mathrm{cov}(y,X) = \mathrm{cov}(y,W)$.\\

\noindent This motivates the regression calibration estimator in the functional response case, defined as 
$$\mathscr{B}_{rc} = \widehat{\mathscr{C}}_{y,W}\widehat{\mathscr{K}}_{X}^{-},$$
where $\widehat{\mathscr{C}}_{y,W} = n^{-1} \sum_{i=1}^{n} y_{i} \otimes W_{i} - \bar{y} \otimes \bar{W}$ is the empirical covariance operator between the $y_{i}$'s and the $W_{i}$'s, and the estimator $\widehat{\mathscr{K}}_{X}$ is exactly as in the scalar response case considered earlier. We then have:
\begin{theorem} \label{thm2}
Under the assumptions of Theorem \ref{thm1} with $E(\epsilon^{2}) < \infty$ replaced by $E(||\epsilon||^{2}) < \infty$, we have $|||\widehat{\mathscr{B}}_{rc} - \mathscr{B}|||_{HS} = O_{P}(n^{-1/2})$ as $n \rightarrow \infty$.
\end{theorem}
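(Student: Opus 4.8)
The plan is to mirror the argument for part (c) of Theorem \ref{thm1}, transplanting it from the Hilbert space of the covariate to the Hilbert space of Hilbert--Schmidt operators. The crucial observation is that the entire covariate-side analysis --- in particular the conclusions (a) and (b) of Theorem \ref{thm1} --- concerns only $W$ and the estimation of $\mathscr{K}_X$, and is completely insensitive to whether the response is scalar or functional. I would therefore invoke Theorem \ref{thm1}(b) verbatim to assert $|||\widehat{\mathscr{K}}_{X}^{-} - \mathscr{K}_{X}^{-}|||_{HS} = O_P(n^{-1/2})$, taking this as given, and focus the new work entirely on the response side.

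The only genuinely new ingredient is a rate for the empirical cross-covariance operator $\widehat{\mathscr{C}}_{y,W} = n^{-1}\sum_{i=1}^{n} y_i \otimes W_i - \bar y \otimes \bar W$, which is a recentred empirical mean of the i.i.d.\ Hilbert--Schmidt-operator-valued random elements $y_i \otimes W_i$. Since the space of Hilbert--Schmidt operators is itself a separable Hilbert space and $|||y \otimes W|||_{HS} = ||y||\,||W||$, I would apply the central limit theorem and weak law of large numbers in separable Hilbert spaces (as in the scalar case, see \cite{Bosq00}) to obtain both $|||\widehat{\mathscr{C}}_{y,W} - \mathrm{cov}(y,W)|||_{HS} = O_P(n^{-1/2})$ and, as a consequence, $|||\widehat{\mathscr{C}}_{y,W}|||_{op} = O_P(1)$, where $|||\cdot|||_{op}$ denotes the operator norm.

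With these two rates in hand, the remainder is the algebraic splitting. Recalling from the discussion preceding the theorem that $\mathscr{B} = \mathrm{cov}(y,W)\mathscr{K}_{X}^{-}$ and $\widehat{\mathscr{B}}_{rc} = \widehat{\mathscr{C}}_{y,W}\widehat{\mathscr{K}}_{X}^{-}$, I would write
\begin{equation*}
\widehat{\mathscr{B}}_{rc} - \mathscr{B} = \widehat{\mathscr{C}}_{y,W}\bigl(\widehat{\mathscr{K}}_{X}^{-} - \mathscr{K}_{X}^{-}\bigr) + \bigl(\widehat{\mathscr{C}}_{y,W} - \mathrm{cov}(y,W)\bigr)\mathscr{K}_{X}^{-},
\end{equation*}
and bound the Hilbert--Schmidt norm using submultiplicativity under composition ($|||AB|||_{HS}\le |||A|||_{op}|||B|||_{HS}$ and $|||AB|||_{HS}\le |||A|||_{HS}|||B|||_{op}$), giving
\begin{equation*}
|||\widehat{\mathscr{B}}_{rc} - \mathscr{B}|||_{HS} \leq |||\widehat{\mathscr{C}}_{y,W}|||_{op}\,|||\widehat{\mathscr{K}}_{X}^{-} - \mathscr{K}_{X}^{-}|||_{HS} + |||\widehat{\mathscr{C}}_{y,W} - \mathrm{cov}(y,W)|||_{HS}\,|||\mathscr{K}_{X}^{-}|||_{op}.
\end{equation*}
Here $|||\widehat{\mathscr{C}}_{y,W}|||_{op}=O_P(1)$, the factor $|||\widehat{\mathscr{C}}_{y,W} - \mathrm{cov}(y,W)|||_{HS}=O_P(n^{-1/2})$, and $|||\mathscr{K}_{X}^{-}|||_{op}$ is a finite constant since $\mathscr{K}_X$ is a fixed finite-rank operator; hence both terms are $O_P(n^{-1/2})$, yielding the claim.

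I do not anticipate a serious structural obstacle, precisely because Theorem \ref{thm1} has already absorbed the hard part (consistent rank estimation and stable inversion of the contaminated finite-rank covariance). The one point genuinely requiring care is the moment bookkeeping underlying the Hilbert-space CLT for $y_i\otimes W_i$: one must verify $E(||y||^2\,||W||^2)<\infty$ under the stated assumptions, and in particular control the cross term $E(||\epsilon||^2\,||W||^2)$ arising from the decomposition $y=\alpha+\mathscr{B}X+\epsilon$ and $W=X+U$. This is the functional analogue of the (implicit) moment verification in the scalar proof, and is the only step where the replacement of $E(\epsilon^2)<\infty$ by $E(||\epsilon||^2)<\infty$ must be checked against the conditioning assumptions on $\epsilon$.
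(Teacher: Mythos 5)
Your proposal takes essentially the same route as the paper's proof: invoke Theorem \ref{thm1}(b) for $|||\widehat{\mathscr{K}}_{X}^{-} - \mathscr{K}_{X}^{-}|||_{HS}$, apply the central limit theorem in the separable Hilbert space of Hilbert--Schmidt operators to get $|||\widehat{\mathscr{C}}_{y,W} - \mathrm{cov}(y,W)|||_{HS} = O_{P}(n^{-1/2})$, and conclude by an algebraic splitting with norm submultiplicativity --- your two-term decomposition is merely a regrouping of the paper's three-term one, and your operator-norm bounds are interchangeable with the paper's Hilbert--Schmidt-norm bounds. The one caveat you flag, verifying $E(||y||^{2}||W||^{2}) < \infty$ for the CLT applied to $y_{i} \otimes W_{i}$, is a genuine fine point, but it is equally glossed over in the paper, which only checks $|||\mathrm{cov}(y,W)|||_{HS} < \infty$.
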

\begin{proof}[Proof of Theorem \ref{thm2}]
Note that $|||\mathrm{cov}(y,W)|||_{HS}^{2} \leq E(||y||^{2})E(||W||^{2}) < \infty$. By the central limit theorem for separable Hilbert spaces (see, e.g., \cite{Bosq00}) applied to the space of Hilbert-Schmidt operators, it follows that $|||\widehat{\mathscr{C}}_{y,W} - \mathrm{cov}(y,W)|||_{HS} = O_{P}(n^{-1/2})$ as $n \rightarrow \infty$. So,
\begin{eqnarray*}
&&|||\widehat{\mathscr{B}}_{rc} - \mathscr{B}|||_{HS} \\
&\leq& |||(\widehat{\mathscr{C}}_{y,W} - \mathrm{cov}(y,W))(\widehat{\mathscr{K}}_{X}^{-} - \mathscr{K}_{X}^{-})|||_{HS} + |||(\widehat{\mathscr{C}}_{y,W} - \mathrm{cov}(y,W))\mathscr{K}_{X}^{-})|||_{HS} \\
&& \ + \ |||\mathrm{cov}(y,W)(\widehat{\mathscr{K}}_{X}^{-} - \mathscr{K}_{X}^{-})|||_{HS} \\
&\leq& |||\widehat{\mathscr{C}}_{y,W} - \mathrm{cov}(y,W)|||_{HS}|||\widehat{\mathscr{K}}_{X}^{-} - \mathscr{K}_{X}^{-}|||_{HS} + |||\widehat{\mathscr{C}}_{y,W} - \mathrm{cov}(y,W)|||_{HS}|||\mathscr{K}_{X}^{-}|||_{HS} \\
&& \ + \ |||\mathrm{cov}(y,W)|||_{HS}|||\widehat{\mathscr{K}}_{X}^{-} - \mathscr{K}_{X}^{-}|||_{HS}.
\end{eqnarray*}
The right hand side of the last inequality above is $O_{P}(n^{-1/2})$ by the earlier fact and part (b) of Theorem \ref{thm1}. This completes the proof of the theorem.
\end{proof}

\subsection{Practical implementation} \label{subsec2-1}

\indent To implement the grid subsampling technique, it is useful in practice to find several $\widehat{r}_{L_{*}}$'s in Step 1 of earlier algorithm, each corresponding to a different random selection of subgrid points $s_{1}, s_{2}, \ldots, s_{L_{*}}$ from the original grid $t_{1}, t_{2}, \ldots, t_{L}$. This helps reduce sampling bias and thus results in a more stable estimator. Suppose we find $B$ such separate estimators, with the estimators at the $b$th iteration being denoted by $\widehat{r}_{b,L_{*}}$. The final estimator is given by the mode of the empirical distribution of the $\widehat{r}_{b,L_{*}}$'s, namely, 
$$\widetilde{r}_{L_{*}} := \mbox{arg}\max_{1 \leq q \leq (L_{*}/4 - 1)} B^{-1} \sum_{b=1}^{B} \mathbf{1}(\widehat{r}_{b,L_{*}} = q).$$
Note that the upper bound on the range of $q$ values is enforced by the identifiability condition discussed earlier.
\begin{proposition} \label{prop1}
For any $B \geq 1$,  $\widetilde{r}_{L_{*}}$ converges in probability to $r$ as $n \rightarrow \infty$.
\end{proposition}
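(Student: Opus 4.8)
The plan is to reduce the convergence of the mode estimator $\widetilde{r}_{L_{*}}$ to the already-established consistency of each individual estimator $\widehat{r}_{b,L_{*}}$. By the argument in part (a) of Theorem \ref{thm1}, each $\widehat{r}_{b,L_{*}}$ is a consistent estimator of $r$, i.e. $P(\widehat{r}_{b,L_{*}} = r) \to 1$ as $n \to \infty$, provided the subgrid at iteration $b$ is adequate (which holds almost surely since each $s_{j}$ is drawn from a continuous distribution and $L_{*} \geq 4(r+1)$). The key observation is that $B$ is held fixed as $n \to \infty$, so we are dealing with a fixed finite collection of consistent integer-valued estimators, and the mode of a finite sample will concentrate at $r$ as soon as every member of the collection hits $r$ with high probability.

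First I would show that the event $A_{n} := \{\widehat{r}_{b,L_{*}} = r \text{ for all } 1 \leq b \leq B\}$ has probability tending to one. Indeed, by the union bound,
\begin{equation*}
P(A_{n}^{c}) = P\!\left(\bigcup_{b=1}^{B} \{\widehat{r}_{b,L_{*}} \neq r\}\right) \leq \sum_{b=1}^{B} P(\widehat{r}_{b,L_{*}} \neq r) \longrightarrow 0
\end{equation*}
as $n \to \infty$, since each summand vanishes by part (a) of Theorem \ref{thm1} and $B$ is a fixed finite integer. Next I would argue that on the event $A_{n}$ the mode estimator is exactly $r$: if $\widehat{r}_{b,L_{*}} = r$ for every $b$, then $B^{-1}\sum_{b=1}^{B} \mathbf{1}(\widehat{r}_{b,L_{*}} = r) = 1$, while for any $q \neq r$ the corresponding empirical frequency is $0$, so the maximizer of $q \mapsto B^{-1}\sum_{b=1}^{B} \mathbf{1}(\widehat{r}_{b,L_{*}} = q)$ is uniquely $r$ (one should note here that $r$ lies in the admissible range $1 \leq q \leq L_{*}/4 - 1$ guaranteed by the identifiability condition $L_{*} \geq 4(r+1)$, so $r$ is a legitimate candidate for the argmax). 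Hence $\{A_{n}\} \subseteq \{\widetilde{r}_{L_{*}} = r\}$, which gives $P(\widetilde{r}_{L_{*}} = r) \geq P(A_{n}) \to 1$, and since $\widetilde{r}_{L_{*}}$ takes integer values this is precisely convergence in probability to $r$.

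The argument is essentially routine once one recognizes that fixing $B$ is what makes the union bound effective. The only point requiring a moment's care — and the place I would expect any subtlety to reside — is the joint versus marginal behaviour of the $\widehat{r}_{b,L_{*}}$'s: the subgrids across iterations are drawn from the same data, so the estimators are dependent. However, this dependence is immaterial here because the union bound requires only the marginal consistency $P(\widehat{r}_{b,L_{*}} \neq r) \to 0$ for each $b$ and makes no independence assumption. One should confirm that the consistency of part (a) applies verbatim to each randomly subsampled adequate subgrid (not merely to one fixed subgrid), which is immediate since Theorem \ref{thm1}(a) holds for any adequate grid of size $L_{*} \geq 4(r+1)$ and each random subgrid is adequate with probability one. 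With this in hand the proof is complete.
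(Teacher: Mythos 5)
Your proof is correct and follows essentially the same route as the paper's: both reduce the claim to the consistency of each individual $\widehat{r}_{b,L_{*}}$ from part (a) of Theorem \ref{thm1} and exploit the fact that $B$ is fixed, so that with probability tending to one every (or, in the paper's phrasing, a dominant fraction of) $\widehat{r}_{b,L_{*}}$ equals $r$ and the mode must be $r$. The only cosmetic difference is that you pass through the event $\{\widehat{r}_{b,L_{*}} = r \ \forall b\}$ via a union bound, whereas the paper tracks the convergence of the empirical frequencies $B^{-1}\sum_{b}\mathbf{1}(\widehat{r}_{b,L_{*}} = q)$ directly; the substance is identical.
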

\begin{remark} \label{rem2}
The above proposition shows that even if we use $\widetilde{r}_{L_{*}}$ instead of $\widehat{r}_{L_{*}}$ in the definition of $\widehat{\mathscr{K}}_{X}$, the resulting estimator of $\mathscr{K}_{X}$, and consequently that of $\beta$ (or $\mathscr{B}$ in the function-on-function setting), will be $\sqrt{n}$-consistent. To see this, note that the proof of Theorems \ref{thm1} and \ref{thm2} only utilizes the fact that $\widehat{r}_{L_{*}}$ is consistent for $r$ and that $L_{*}$ is fixed. Thus, the proof goes through unchanged for $\widetilde{r}_{L_{*}}$ by Proposition \ref{prop1}.
\end{remark}
\begin{proof}[Proof of Proposition \ref{prop1}]
Fix $q \neq r$. Since for each $1 \leq b \leq B$, $\widehat{r}_{b,L_{*}}$ converges to $r$ by part (a) of Theorem \ref{thm1}, it follows that $B^{-1} \sum_{b=1}^{B} \mathbf{1}(\widehat{r}_{b,L_{*}} = q) \rightarrow 0$ as $n \rightarrow \infty$. Also, $B^{-1} \sum_{b=1}^{B} \mathbf{1}(\widehat{r}_{b,L_{*}} = r) \rightarrow 1$ as $n \rightarrow \infty$. Thus, for any $q \neq r$, we have $\mathbf{1}\left\{\sum_{b=1}^{B} \mathbf{1}(\widehat{r}_{b,L_{*}} = r) > \sum_{b=1}^{B} \mathbf{1}(\widehat{r}_{b,L_{*}} = q)\right\} \rightarrow 1$ in probability as $n \rightarrow \infty$. So, $\mathbf{1}\left\{\sum_{b=1}^{B} \mathbf{1}(\widehat{r}_{b,L_{*}} = r) > \max_{1 \leq q \leq (L_{*}/4 - 1), q \neq r} \sum_{b=1}^{B} \mathbf{1}(\widehat{r}_{b,L_{*}} = q)\right\} \rightarrow 1$ in probability as $n \rightarrow \infty$. But this is the same as saying $I(\widetilde{r}_{L_{*}} = r) \rightarrow 1$, i.e., $\widetilde{r}_{L_{*}}$ converges to $r$ in probability as $n \rightarrow \infty$.
\end{proof}
\indent This being said, the algorithm for constructing the grid sub-sampling estimator of $\mathscr{K}_{X}$ in practice is thus as follows:
\begin{itemize}
\item[Step 1*:] For the $b$th iteration ($1 \leq b \leq B$), randomly select a sub-grid $s_{1}, s_{2}, \ldots, s_{L_{*}}$ of size $L_{*}$ from the original grid $t_{1}, t_{2}, \ldots, t_{L}$. Assuming that the original grid is itself adequate, one may do this as follows. Set $L_*= \lfloor L/m \rfloor$ for some pre-chosen fixed integer $m>1$ such that $\lfloor L/m \rfloor>4(r+1)$ (which is always possible provided the original grid is sufficiently large, because $r$ is finite). Then define $I_{j,L_{*}} =  \bigcup_{p=m(j-1)+1}^{mj} I_{p,L}$ for $1 \leq j \leq L_{*}$, and select the $j$th sub-grid node $s_{j}$ uniformly at random among the values $t_{m(j-1)+1},\ldots,t_{mj}$. \\
\indent Compute the empirical covariance $\widehat{K}_{W*}$ of $W$ for this chosen grid.

\item[Step 2*:] Find the value of $f_{L_{*}}(j) = \min |||P_{L_{*},\delta_{*}} \circ (\widehat{K}_{W*} - \Theta)|||_{F}^{2}$ over all $L_{*} \times L_{*}$ positive definite matrices $\Theta$ of rank $j$, for each $j = 1,2,\ldots,M$, where $1 \leq M \leq (L_{*}/4) - 1$ is a pre-chosen integer independent of the sub-grid. Here, the $(i,j)$th element of $P_{L_{*},\delta_{*}}$ equals $\mathbf{1}(|i - j| > \lceil L_{*}\delta_{*}\rceil)$ for a pre-chosen $\delta_{*} \in [0,1/4]$. The parameter $\delta_{*}$ is an upper bound on the bandwidth allowed in practice. The minimisation of $f_{L_{*}}(j)$ over $\Theta$ can be carried out by a quasi-Newton method (e.g. using the function \texttt{fminunc} in \texttt{MATLAB} or the function \texttt{optim} in \texttt{R}), with starting value equal to the rank $j$ projection of $\hat{K}_W$ obtained via SVD. 

\item[Step 3*:] Next, set $\widehat{r}_{b,L_{*}} = \min\{j : f_{L_{*}}(j) \leq c_{1}\}$, where $c_{1} > 0$ is a pre-chosen cut-off level for the scree plot $j \mapsto f_{L_{*}}(j)$. Note that $c_n$ is in 1-1 correspondence with $\tau_n$, and the resulting $\widehat{r}_{b,L_{*}}$ is the same as what would be obtained by solving \eqref{DPestimator} instead (see \cite{DP16} for a rigorous proof).

\item[Step 4*:] Compute the mode of the empirical distribution of the $\widehat{r}_{b,L_{*}}$'s, namely, 
$$\widetilde{r}_{L_{*}} = \mbox{arg}\max_{1 \leq q \leq (L_{*}/4 - 1)} B^{-1} \sum_{b=1}^{B} \mathbf{1}(\widehat{r}_{b,L_{*}} = q).$$

\item[Step 5*:] Compute the empirical covariance $\widehat{K}_{W}$ of $W$ for the full grid of size $L$, and set $\widehat{K}_{X} = \mbox{arg}\min |||P_{L} \circ (\widehat{K}_{W} - \Theta)|||_{F}^{2}$ over all $L \times L$ positive definite matrices $\Theta$ of rank $\widetilde{r}_{L_{*}}$ (again, by a quasi-Newton method). Construct the estimator  $\widehat{k}_{X}$ of the kernel $k_X$ of $\mathscr{K}_X$ as $ \widehat{k}_{X}(s,t) = \sum_{i=1}^{L}\sum_{j=1}^{L} \widehat{K}_{X}(i,j)\mathbf{1}\{(s,t) \in I_{i,L} \times I_{j,L}\}$, and the estimators $ \widehat{\mathscr{K}}_{X} = \sum_{j=1}^{\widehat{r}_{L_{*}}} \widehat{\lambda}_{j} (\widehat{\eta}_{j} \otimes \widehat{\eta}_{j})$ and $ \widehat{\mathscr{K}}_{X}^{-} = \sum_{j=1}^{\widehat{r}_{L_{*}}} \widehat{\lambda}_{j}^{-1} (\widehat{\eta}_{j} \otimes \widehat{\eta}_{j})$ where $\widehat{\lambda}_{j}$ and $\widehat{\eta}_{j}$ denote the eigenvalues and the eigenfunctions of the kernel $\widehat{k}_{X}(s,t)$.
\end{itemize}


\section{Regression calibration method for functional quadratic regression}   \label{fqr}

We now demonstrate how one can extend our regression calibration estimator to the case of functional quadratic regression with functional measurement error. \cite{YM10} studied this model for functional data and demonstrated its utility using the well-known Tecator dataset that contains spectrometry measurements (see, e.g., \cite{RS05}). The model for the functional quadratic regression with scalar response and true covariate $X \in L_{2}[0,1]$ is given by 
$$ y = \alpha + \int_{0}^{1} X(t)\beta(t)dt + \int_{0}^{1} \int_{0}^{1} b(t,s)X(t)X(s)dtds.$$
Consider an operator $\mathscr{B}$ on $L_{2}[0,1]$ defined as $\mathscr{B}f(t) = \int_{0}^{1} b(s,t)f(s)ds$ for $f \in L_{2}[0,1]$. Assuming that $b(\cdot,\cdot) \in L_{2}([0,1]^{2})$, it follows that $\mathscr{B}$ is a Hilbert-Schmidt operator. Recall that the inner product between two Hilbert-Schmidt operators $\mathscr{T}$ and $\mathscr{S}$ is given by $\langle \mathscr{T},\mathscr{S}\rangle_{HS} = \mbox{tr}(\mathscr{S}^{*}\mathscr{T})$. Using this definition, it follows that 
\begin{eqnarray*}
\int_{0}^{1} \int_{0}^{1} b(t,s)X(t)X(s)dtds = \langle \mathscr{B}X, X\rangle = \langle X, \mathscr{B}^{*}X\rangle = \mbox{tr}(X \otimes \mathscr{B}^{*}X) = \mbox{tr}(\mathscr{B}^{*}(X \otimes X)) = \langle X \otimes X, \mathscr{B} \rangle_{HS}.
\end{eqnarray*}
Thus, we obtain the following alternative representation of the regression model using operator notation.
$$ y = \alpha + \langle X,\beta\rangle + \langle X \otimes X, \mathscr{B} \rangle_{HS} + \epsilon.$$
This formulation also shows how quadratic regression is equivalent to linear regression with covariates and corresponding slope parameters lying in different Hilbert spaces. As in the previous section, we will assume for identifiability that $\beta$ and $\mathscr{B}$ lie in the range spaces of $\mathscr{K}_{X}$ and $\mathrm{var}(X \otimes X)$, respectively. We assume that the true covariate $X$ is Gaussian. Define $\zeta_{jj'} = \eta_{j} \otimes \eta_{j'}$ for $1 \leq j,j' \leq r$. Denote the tensor product between two Hilbert-Schmidt operators by $\otimes_{2}$ and the inner product on this tensor space by $\langle \cdot,\cdot\rangle_{2}$ and the norm by $|||\cdot|||_{2}$. Then, direct calculations show that 
\begin{eqnarray*}
\mathrm{var}(X \otimes X) &=& E\{(X \otimes X) \otimes_{2} (X \otimes X)\} - \mathscr{K}_{X} \otimes_{2} \mathscr{K}_{X} \\
&=& 2\sum_{j=1}^{r} \lambda_{j}^{2} (\zeta_{jj} \otimes_{2} \zeta_{jj}) + \sum_{1 \leq j < j' \leq r} \lambda_{j}\lambda_{j'} \{\zeta_{jj'} \otimes_{2} \zeta_{jj'} + \zeta_{jj'} \otimes_{2} \zeta_{j'j} + \zeta_{j'j} \otimes_{2} \zeta_{j'j} + \zeta_{j'j} \otimes \zeta_{jj'}\}.
\end{eqnarray*}
Observe that for indices $(i_{1}, i_{2}, i_{3}, i_{4})$ and $(j_{1}, j_{2}, j_{3}, j_{4})$, we have 
$$ \langle \zeta_{i_{1}i_{2}} \otimes_{2} \zeta_{i_{3}i_{4}}, \zeta_{j_{1}j_{2}} \otimes \zeta_{j_{3}j_{4}}\rangle_{2} = \langle \zeta_{i_{1}i_{2}},\zeta_{j_{1}j_{2}}\rangle_{HS}\langle \zeta_{i_{3}i_{4}},\zeta_{j_{3}j_{4}}\rangle_{HS} =  \delta_{i_{1}j_{1}}\delta_{i_{2}j_{2}}\delta_{i_{3}j_{3}}\delta_{i_{4}j_{4}},$$
where $\delta_{ij}$ is the Kronecker delta symbol. Thus, $\{\zeta_{j_{1}j_{2}} \otimes_{2} \zeta_{j_{3}j_{4}} : 1 \leq j_{1}, j_{2}, j_{3}, j_{4} \leq r\}$ forms an orthonormal system in this tensor product space. Further, $\mathrm{cov}(\langle X \otimes X, \mathscr{B}\rangle_{HS}, X) = 0$ and $\mathrm{cov}(\langle X,\beta\rangle, X \otimes X) = 0$ by symmetry arguments. Note that these are zero elements in different spaces. The formulation as a linear regression problem and with the above facts yields the following least square functional normal equation. 
\[
\begin{bmatrix}
\mathscr{K}_{X} & 0 \\
0 & \mathrm{var}(X \otimes X) 
\end{bmatrix}
\begin{bmatrix}
\beta \\
\mathscr{B}
\end{bmatrix}
=
\begin{bmatrix}
\mathrm{cov}(y,X) \\
\mathrm{cov}(y,X \otimes X)
\end{bmatrix}.
\]
Thus, the least square solution for $\beta$ and $\mathscr{B}$ are given by
\[
\begin{bmatrix}
\widetilde{\beta} \\
\widetilde{\mathscr{B}}
\end{bmatrix}
=
\begin{bmatrix}
\mathscr{K}_{X}^{-} & 0 \\
0 & (\mathrm{var}(X \otimes X))^{-}
\end{bmatrix}
\begin{bmatrix}
\mathrm{cov}(y,X) \\
\mathrm{cov}(y,X \otimes X)
\end{bmatrix}.
\]
\indent However, we do not observe the true covariate $X$, but observe $W = X + U$ as considered in the previous section. Note that we do not impose any distributional assumption on $U$ (except bandedness of its covariance) so that the distribution of $W$ is arbitrary. We use the method discussed in the previous section to find $\widehat{\mathscr{K}}_{X}$ and consequently, its eigenvalues and eigenfunctions that are denoted by $\widehat{\lambda}_{j}$ and $\widehat{\eta}_{j}$. Define $\widehat{\zeta}_{jj'} = \widehat{\eta}_{j} \otimes \widehat{\eta}_{j'}$ for $1 \leq j,j' \leq \widetilde{r}_{L_{*}}$, where $\widetilde{r}_{L_{*}}$ is the rank of $\widehat{\mathscr{K}}_{X}$ as defined in the previous section. Define
$$ \widehat{\mathrm{var}(X \otimes X)} = 2\sum_{j=1}^{\widetilde{r}_{L_{*}}} \widehat{\lambda}_{j}^{2} (\widehat{\zeta}_{jj} \otimes_{2} \widehat{\zeta}_{jj}) + \sum_{1 \leq j < j' \leq \widetilde{r}_{L_{*}}} \widehat{\lambda}_{j}\widehat{\lambda}_{j'} \{\widehat{\zeta}_{jj'} \otimes_{2} \widehat{\zeta}_{jj'} + \widehat{\zeta}_{jj'} \otimes_{2} \widehat{\zeta}_{j'j} + \widehat{\zeta}_{j'j} \otimes_{2} \widehat{\zeta}_{j'j} + \widehat{\zeta}_{j'j} \otimes \widehat{\zeta}_{jj'}\}.$$
So, we get that
$$ (\widehat{\mathrm{var}(X \otimes X)})^{-} = 2\sum_{j=1}^{\widetilde{r}_{L_{*}}} \widehat{\lambda}_{j}^{-2} (\widehat{\zeta}_{jj} \otimes_{2} \widehat{\zeta}_{jj}) + \sum_{1 \leq j < j' \leq \widetilde{r}_{L_{*}}} \widehat{\lambda}_{j}^{-1}\widehat{\lambda}_{j'}^{-1} \{\widehat{\zeta}_{jj'} \otimes_{2} \widehat{\zeta}_{jj'} + \widehat{\zeta}_{jj'} \otimes_{2} \widehat{\zeta}_{j'j} + \widehat{\zeta}_{j'j} \otimes_{2} \widehat{\zeta}_{j'j} + \widehat{\zeta}_{j'j} \otimes \widehat{\zeta}_{jj'}\}.$$
We assume that $X$ and $U$ satisfy a higher order uncorrelatedness condition in the sense that $$E\{X(t_{1})^{a_{1}}X(t_{2})^{a_{2}}X(t_{3})^{a_{3}}U(t_{4})^{a_{4}}U(t_{5})^{a_{5}}\} = E\{X(t_{1})^{a_{1}}X(t_{2})^{a_{2}}X(t_{3})^{a_{3}}\}E\{U(t_{4})^{a_{4}}U(t_{5})^{a_{5}}\}$$
for all choices of $t_{j} \in [0,1]$ and $a_{j} \in \{0,1\}, 1 \leq j \leq 5$, satisfying $\sum_{j} a_{j} \leq 4$. This is obviously true if $X$ and $U$ are independent but is much weaker than assuming independence. Under this assumption, it can be shown that $\mathrm{cov}(y,X \otimes X) = \mathrm{cov}(y,W \otimes W)$. We can then estimate $\mathrm{cov}(y,X \otimes X)$ by $\widehat{\mathscr{C}}_{y,W \otimes W}$, which is the empirical covariance between the $y_{i}$'s and the $W_{i} \otimes W_{i}$'s. Thus, the regression calibration estimators of $\beta$ and $\mathscr{B}$ for a functional quadratic regression model are defined as
\[
\begin{bmatrix}
\widehat{\beta}_{rc} \\
\widehat{\mathscr{B}}_{rc}
\end{bmatrix}
=
\begin{bmatrix}
\widehat{\mathscr{K}}_{X}^{-} & 0 \\
0 & (\widehat{\mathrm{var}(X \otimes X)})^{-}
\end{bmatrix}
\begin{bmatrix}
\widehat{C}_{yW} \\
\widehat{\mathscr{C}}_{y,W \otimes W}
\end{bmatrix}.
\]
Thus, for the functional quadratic regression, the estimator of the slope parameter that takes values in the underlying Hilbert space stays the same as that in the functional linear regression. The estimator of the slope parameter in the Hilbert-Schmidt space can be calculated using the following simplified expression.
\begin{eqnarray*} 
\widehat{\mathscr{B}}_{rc} &=& (\widehat{\mathrm{var}(X \otimes X)})^{-}\widehat{\mathscr{C}}_{y,W \otimes W} \\
&=&  2\sum_{j=1}^{\widetilde{r}_{L_{*}}} \widehat{\lambda}_{j}^{-2} \langle \widehat{\zeta}_{jj},\widehat{\mathscr{C}}_{y,W \otimes W}\rangle_{HS}~\widehat{\zeta}_{jj} + \sum_{1 \leq j < j' \leq \widetilde{r}_{L_{*}}} \widehat{\lambda}_{j}^{-1}\widehat{\lambda}_{j'}^{-1} \{ \langle \widehat{\zeta}_{jj'},\widehat{\mathscr{C}}_{y,W \otimes W}\rangle_{HS}(\widehat{\zeta}_{jj'} + \widehat{\zeta}_{j'j}) \\
&& \hspace{8cm} + \ \langle \widehat{\zeta}_{j'j},\widehat{\mathscr{C}}_{y,W \otimes W}\rangle_{HS}(\widehat{\zeta}_{j'j} + \widehat{\zeta}_{jj'})\} \\
&=& 2\sum_{j=1}^{\widetilde{r}_{L_{*}}} \widehat{\lambda}_{j}^{-2} \langle\widehat{\mathscr{C}}_{y,W \otimes W}\widehat{\eta}_{j},\widehat{\eta}_{j}\rangle~\widehat{\zeta}_{jj} + \sum_{1 \leq j < j' \leq \widetilde{r}_{L_{*}}} \widehat{\lambda}_{j}^{-1}\widehat{\lambda}_{j'}^{-1} \{ \langle\widehat{\mathscr{C}}_{y,W \otimes W}\widehat{\eta}_{j},\widehat{\eta}_{j'}\rangle(\widehat{\zeta}_{jj'} + \widehat{\zeta}_{j'j}) \\
&& \hspace{8cm} + \ \langle\widehat{\mathscr{C}}_{y,W \otimes W}\widehat{\eta}_{j'},\widehat{\eta}_{j}\rangle(\widehat{\zeta}_{j'j} + \widehat{\zeta}_{jj'})\} \\
&=& 2\sum_{j=1}^{\widetilde{r}_{L_{*}}} \widehat{\lambda}_{j}^{-2} \langle\widehat{\mathscr{C}}_{y,W \otimes W}\widehat{\eta}_{j},\widehat{\eta}_{j}\rangle~\widehat{\zeta}_{jj} + \sum_{1 \leq j \neq j' \leq \widetilde{r}_{L_{*}}} \widehat{\lambda}_{j}^{-1}\widehat{\lambda}_{j'}^{-1} \langle\widehat{\mathscr{C}}_{y,W \otimes W}\widehat{\eta}_{j},\widehat{\eta}_{j'}\rangle(\widehat{\zeta}_{jj'} + \widehat{\zeta}_{j'j}).
\end{eqnarray*}
The third equality above follows from the fact that for any Hilbert-Schmidt operator $\mathscr{T}$ and elements $u,v$ in the underlying Hilbert space, we have $\langle u \otimes v,\mathscr{T}\rangle_{HS} = \mbox{tr}(\mathscr{T}^{*}(u \otimes v)) = \langle \mathscr{T}u,v\rangle$. The last equality above follows form the fact that $\widehat{\mathscr{C}}_{y,W \otimes W} = \widehat{\mathscr{C}}_{y,W \otimes W}^{*}$, i.e., it is a self-adjoint operator. \\
\indent We now study the asymptotic behaviour of the regression calibration estimators for the functional quadratic model. We will need the following notation and definitions. Let ${\cal H}$ be the vector space containing vector elements of the form $(u, \mathscr{T})$, where $u$ belongs to the underlying separable Hilbert space of functions and $\mathscr{T}$ is a Hilbert-Schmidt operator on it. We equip ${\cal H}$ with the inner product $\langle (u_{1},\mathscr{T}_{1}), (u_{2},\mathscr{T}_{2})\rangle_{D} = \langle u_{1}, u_{2}\rangle + \langle \mathscr{T}_{1},\mathscr{T}_{2}\rangle_{HS}$. The associated norm is denoted by $|||\cdot|||_{D}$. Here the subscript 'D' represents the fact that ${\cal H}$ is the direct sum of the two spaces. This inner product makes ${\cal H}$ a separable Hilbert space. Note that $(\beta,\mathscr{B})$ as well as its estimator $(\widehat{\beta},\widehat{\mathscr{B}})$ are elements of ${\cal H}$. We have the following result.
\begin{theorem}  \label{fqr-thm}
Suppose that the assumptions of Theorem \ref{thm1} hold. Then, all of $|||(\widehat{\mathrm{var}(X \otimes X)})^{-} - (\mathrm{var}(X \otimes X))^{-}|||_{2}$, $|||(\widehat{\mathrm{var}(X \otimes X)}) - (\mathrm{var}(X \otimes X))|||_{2}$ and $|||(\widehat{\beta},\widehat{\mathscr{B}})|||_{D}$ are $O_{P}(n^{-1/2})$ as $n \rightarrow \infty$.
\end{theorem}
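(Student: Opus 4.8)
The plan is to reduce every assertion to the per-eigenpair perturbation rates already extracted in the proof of Theorem~\ref{thm1}, exploiting that, since $r<\infty$ and the relevant eigenvalues are strictly positive and fixed, each object in sight is a \emph{finite} sum of smooth functions of the estimated eigenpairs $(\widehat{\lambda}_{j},\widehat{\eta}_{j})$ of $\widehat{\mathscr{K}}_{X}$. First I would condition on the event $\{\widetilde{r}_{L_{*}} = r\}$, which has probability tending to one by Proposition~\ref{prop1} (equivalently Theorem~\ref{thm1}(a)), and split each target probability exactly as in \eqref{eq1}; this replaces $\widehat{\mathrm{var}(X \otimes X)}$ by the estimator built with the correct rank $r$, so that it and its population counterpart range over the \emph{same} index set $1 \leq j \leq j' \leq r$. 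On this event the proof of Theorem~\ref{thm1}(b) already supplies, for each fixed $j$, the rates $|\widehat{\lambda}_{j} - \lambda_{j}| = O_{P}(n^{-1/2})$ and $||\widehat{\eta}_{j} - \eta_{j}|| = O_{P}(n^{-1/2})$, together with $\widehat{\lambda}_{j} \to \lambda_{j} > 0$.

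For the forward bound $|||\widehat{\mathrm{var}(X \otimes X)} - \mathrm{var}(X \otimes X)|||_{2}$ I would expand the difference term by term using the explicit formulas stated before the theorem and apply the triangle inequality over the finitely many summands. In each summand I telescope to separate the scalar eigenvalue factor from the tensor factor: a typical contribution is of the form $|\widehat{\lambda}_{j}\widehat{\lambda}_{j'} - \lambda_{j}\lambda_{j'}|\,|||\widehat{\zeta}_{jj'} \otimes_{2} \widehat{\zeta}_{j'j}|||_{2} + \lambda_{j}\lambda_{j'}\,|||\widehat{\zeta}_{jj'} \otimes_{2} \widehat{\zeta}_{j'j} - \zeta_{jj'} \otimes_{2} \zeta_{j'j}|||_{2}$. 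Since $|||\zeta_{jj'}|||_{HS} = 1$ and $|||\widehat{\zeta}_{jj'} - \zeta_{jj'}|||_{HS} = |||\widehat{\eta}_{j} \otimes \widehat{\eta}_{j'} - \eta_{j} \otimes \eta_{j'}|||_{HS} \leq ||\widehat{\eta}_{j} - \eta_{j}|| + ||\widehat{\eta}_{j'} - \eta_{j'}||$, and likewise for the degree-two tensors, every factor is either $O_{P}(1)$ or $O_{P}(n^{-1/2})$, so a finite sum of such products is $O_{P}(n^{-1/2})$.

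The inverse bound $|||(\widehat{\mathrm{var}(X \otimes X)})^{-} - (\mathrm{var}(X \otimes X))^{-}|||_{2}$ is the genuinely delicate step, and I would treat it as the exact analogue of \eqref{eq2}. The observation that makes it tractable is that the nonzero eigenvalues of $\mathrm{var}(X \otimes X)$ are precisely $2\lambda_{j}^{2}$ and $2\lambda_{j}\lambda_{j'}$ (the off-diagonal block being the rank-one operator $\lambda_{j}\lambda_{j'}(\zeta_{jj'} + \zeta_{j'j}) \otimes_{2} (\zeta_{jj'} + \zeta_{j'j})$), all bounded away from zero because $r < \infty$ and $\lambda_{r} > 0$. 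Hence the explicit generalised-inverse formula in the text is valid, and I would bound the difference of inverses as in \eqref{eq2}, using $|\widehat{\lambda}_{j}^{-1}\widehat{\lambda}_{j'}^{-1} - \lambda_{j}^{-1}\lambda_{j'}^{-1}| \leq \widehat{\lambda}_{j}^{-1}\widehat{\lambda}_{j'}^{-1}\lambda_{j}^{-1}\lambda_{j'}^{-1}|\widehat{\lambda}_{j}\widehat{\lambda}_{j'} - \lambda_{j}\lambda_{j'}|$; the inverted factors stay $O_{P}(1)$ precisely because $\widehat{\lambda}_{j} \to \lambda_{j} > 0$, and the eigenfunction perturbations are controlled as above. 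This is the main obstacle, but it is essentially bookkeeping once the positivity and finiteness of the spectrum are invoked.

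Finally, for $(\widehat{\beta}_{rc},\widehat{\mathscr{B}}_{rc})$ I would treat the two coordinates of the direct-sum norm separately, since $|||(\widehat{\beta}_{rc},\widehat{\mathscr{B}}_{rc}) - (\beta,\mathscr{B})|||_{D}^{2} = ||\widehat{\beta}_{rc} - \beta||^{2} + |||\widehat{\mathscr{B}}_{rc} - \mathscr{B}|||_{HS}^{2}$. The first coordinate is exactly Theorem~\ref{thm1}(c). For the second, I would first establish $|||\widehat{\mathscr{C}}_{y,W \otimes W} - \mathrm{cov}(y,W \otimes W)|||_{HS} = O_{P}(n^{-1/2})$ by the central limit theorem in the separable Hilbert space of Hilbert--Schmidt operators (see \cite{Bosq00}), whose second-moment hypothesis follows from the fourth-moment conditions of Theorem~\ref{thm1} together with the Gaussianity of $X$ and the higher-order uncorrelatedness of $X$ and $U$; I would also invoke the identity $\mathrm{cov}(y,X \otimes X) = \mathrm{cov}(y,W \otimes W)$ established in the text. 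Decomposing $\widehat{\mathscr{B}}_{rc} - \mathscr{B}$ into $((\widehat{\mathrm{var}(X \otimes X)})^{-} - (\mathrm{var}(X \otimes X))^{-})\widehat{\mathscr{C}}_{y,W \otimes W}$ plus $(\mathrm{var}(X \otimes X))^{-}(\widehat{\mathscr{C}}_{y,W \otimes W} - \mathrm{cov}(y,W \otimes W))$ and bounding each term by the inverse rate from the previous paragraph times an $O_{P}(1)$ factor, the same Slutsky-type argument as in the proof of Theorem~\ref{thm2} yields $|||\widehat{\mathscr{B}}_{rc} - \mathscr{B}|||_{HS} = O_{P}(n^{-1/2})$. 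Combining the two coordinates completes the proof.
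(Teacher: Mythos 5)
Your proposal is correct and follows essentially the same route as the paper's own proof: conditioning on the rank-recovery event and splitting probabilities as in \eqref{eq1}, transferring the $O_{P}(n^{-1/2})$ eigenpair perturbation bounds from Theorem \ref{thm1}(b) to the finitely many summands defining $\widehat{\mathrm{var}(X \otimes X)}$ and its generalised inverse, and treating $\widehat{\mathscr{B}}_{rc}$ via the decomposition of Theorem \ref{thm2} together with the CLT in the Hilbert space of Hilbert--Schmidt operators applied to $\widehat{\mathscr{C}}_{y,W\otimes W}$ --- your write-up simply makes explicit the steps the paper compresses into ``arguments similar to those used in the proofs of Theorems \ref{thm1} and \ref{thm2}.'' One caveat, immaterial to the rates but worth recording: your own (correct) spectral computation --- nonzero eigenvalues $2\lambda_{j}^{2}$ and $2\lambda_{j}\lambda_{j'}$ --- in fact shows that the displayed inverse formula is $4$ times the true Moore--Penrose inverse (the coefficients should be $\tfrac{1}{2}\widehat{\lambda}_{j}^{-2}$ and $\tfrac{1}{4}\widehat{\lambda}_{j}^{-1}\widehat{\lambda}_{j'}^{-1}$), so the claim ``the explicit generalised-inverse formula in the text is valid'' is not literally true, and the theorem's assertions hold provided $(\widehat{\mathrm{var}(X \otimes X)})^{-}$ and $(\mathrm{var}(X \otimes X))^{-}$ are interpreted via the same spectral formula, as both you and the paper implicitly do.
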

\begin{proof}[Proof of Theorem \ref{fqr-thm}]
For proving the first part of the theorem, we can argue along similar lines as in the proof of Theorem \ref{thm1} and work with the estimator, say $\widetilde{\mathrm{var}(X \otimes X)})$, obtained when the chosen $\widetilde{r}_{L_{*}}$ equals the true rank $r$. In that case,
$$(\widetilde{\mathrm{var}(X \otimes X)})^{-} = 2\sum_{j=1}^{r} \widetilde{\lambda}_{j}^{-2} (\widetilde{\zeta}_{jj} \otimes_{2} \widetilde{\zeta}_{jj}) + \sum_{1 \leq j < j' \leq r} \widetilde{\lambda}_{j}^{-1}\widetilde{\lambda}_{j'}^{-1} \{\widetilde{\zeta}_{jj'} \otimes_{2} \widetilde{\zeta}_{jj'} + \widetilde{\zeta}_{jj'} \otimes_{2} \widetilde{\zeta}_{j'j} + \widetilde{\zeta}_{j'j} \otimes_{2} \widetilde{\zeta}_{j'j} + \widetilde{\zeta}_{j'j} \otimes \widetilde{\zeta}_{jj'}\}.$$
Since the estimators $\widehat{\lambda}_{j}$'s and $\widehat{\zeta}_{jj'}$'s are consistent for their population counterparts uniformly in $j$ and $j'$, it follows from arguments similar to those used in the proof of Theorem \ref{thm1} that both $|||(\widetilde{\mathrm{var}(X \otimes X)})^{-} - (\mathrm{var}(X \otimes X))^{-}|||_{2}$ and $|||(\widetilde{\mathrm{var}(X \otimes X)}) - (\mathrm{var}(X \otimes X))|||_{2}$ converge to zero in probability as $n \rightarrow \infty$. The first part of the theorem now follows. \\
\indent It follows from the definition of $|||\cdot|||_{D}$ that the proof of the second part of the theorem will be complete if we can show that $|||\widehat{\mathscr{B}} - \mathscr{B}|||_{HS} = O_{P}(n^{-1/2})$ as $n \rightarrow \infty$ (since we have already shown in Theorem \ref{thm2} that $||\widehat{\beta} - \beta|| = O_{P}(n^{-1/2})$ as $n \rightarrow \infty$). This follows using arguments similar to those used in the proof of Theorem \ref{thm2} and using the first part of the present theorem.
\end{proof}

\section{Numerical Experiments} \label{sec3}

In this section, we consider some simulated models with scalar response to assess the performance of the regression calibration estimator $\widehat{\beta}_{rc}$. We compare it with the spectral truncation estimator considered by \cite{HH07}, when the latter is applied in the setting of the mis-specified model $y = a + \langle W,\beta\rangle + e$, i.e., ignoring the presence of measurement error in the data. The random elements $Y$ and $Z$ are expressed as $X = \sum_{j=1}^{r} \lambda_{j}^{1/2} P_{j} \eta_{j}$ and $U = \sum_{l=1}^{D} \gamma_{l}^{1/2} Q_{l} \phi_{l}$. We have considered three models for $Y$ as follows: \\
(M1) $r = 3$; $\eta_{1} \equiv 1, \eta_{2}(t) = \sqrt{2}\sin(2{\pi}t)$, and $\eta_{3}(t) = \sqrt{2}\cos(2{\pi}t)$  for $t \in [0,1]$. \\
(M2) $r = 5$; $\eta_{j}(t)$'s are the first $r$ normalized functions in the Gram-Schmidt orthogonalization of the functions $f_{1}(t) = 5t\sin(2{\pi}t), f_{2}(t) = t\cos(2{\pi}t) - 3, f_{3}(t) = 5t + \sin(2{\pi}t) - 2, f_{4}(t) = \cos(4{\pi}t) + 0.25t^{2}$, and $f_{5}(t) = 6t(1-t)$ for $t \in [0,1]$. \\
(M3) $r = 5$; $\eta_{j}(t)$'s are the normalized versions of the first $r$ shifted Legendre polynomials, namely, $f_{1}(t) \equiv 1, f_{2}(t) = 2t - 1, f_{3}(t) = 6t^{2} - 6t + 1, f_{4}(t) = 20t^{3} - 30t^{2} + 12t - 1$, and $f_{5}(t) = 70t^{4} - 140t^{3} + 90t^{2} - 20t + 1$ for $t \in [0,1]$. \\
In all cases, the $\lambda_{j}$'s are decreasing and chosen to be equispaced between $1.5$ and $0.3$. The $P_{j}$'s and the $Q_{l}$'s are i.i.d. standard random variables. We chose two different values of the bandwidth $\delta$, namely, $\delta = 0.05$ and $\delta = 0.1$, and consequently two different models for $U$. Also, we chose $D = \lfloor(1/\delta)\rfloor$, $\gamma_{1} = 0.09$, and $\gamma_{l}$'s to be  decreasing and equispaced between $0.04$ and $0.01$ for $l=2,3,\ldots,D$. Each $\phi_{j}$ is triangular function with unit norm and and is supported on $[(j-1)\delta,j\delta]$ for $j=1,2,\ldots,D$. The magnitudes of the $\gamma_{l}$'s are chosen to be smaller than the smallest $\lambda_{j}$ so that the measurement error $U$ does not overwhelm the signal $X$, which is typically the case in practice. For generating the response variable, we considered three different choices of the slope parameter corresponding to the three models for $X$. They are $\beta = \eta_{1} + \eta_{2} - \eta_{3}$ for model (M1), $\beta = -0.4\eta_{1} + 2\eta_{2} - \eta_{3} + \eta_{4} - 0.7\eta_{5}$ for model (M2), and $\beta = 0.7\eta_{1} + 3\eta_{2} - \eta_{4} + 0.5\eta_{5}$ for model (M3). For all the models, the error component $\epsilon$ in the functional linear regression is chosen to have a standard normal distribution. \\
\indent Each functional random element in observed over a grid of $100$ points in $[0,1]$ which are chosen by uniformly sampling a single point from each of the intervals $[0.01(j-1),0.01j], j=1,2,\ldots,100$. This grid is then kept fixed throughout our simulations to reduce computational cost. The sample size is $n = 100$. For selecting the number of principal components to be retained in the spectral truncation estimator, we employed a $2$-fold cross-validation technique to minimize the prediction error. The number of components are allowed to vary over the range $1$ to $10$ and for each value, the prediction error is averaged over $500$ sample partitions. For computing the regression calibration estimator, we chose $L_{*} = 25$ and computed the estimator $\widetilde{r}_{L_{*}}$ described in Section \ref{sec2} with $B = 100$. Then Step $2^*$ in Section \ref{subsec2-1} is implemented by selecting $M = 10$ and $c_{1} = 0.01L_{*}^{2}$. The latter implies that the standardized Hilbert-Schmidt norm of the error is allowed to be at most $0.01$.  Also, we chose $\delta_{*} = 0.15$ in Step $2^*$ for all our simulations. This choice already implies a significant dependence of the error across the domain of the functional data and mimics practical situations quite well. 
In all the models, the true rank was correctly estimated. Further, the fact that the rank has been correctly estimated for both choices of $\delta$ suggests a type of robustness of the estimator to mis-specification of the upper bound $\delta_{*}$, which is more so when the true $\delta = 0.05$. 
We also considered other values in  $[0.05,0.15)$ as the upper bound, and the estimates of the slope parameter obtained were not significantly different. Also, in each case, the rank was estimated correctly. We do not report these results here. Finally, we also mention that since our procedure involves a computationally intensive minimization procedure to select the rank, we have not iterated the simulation procedure several times to provide MSEs of the estimators, as is the convention. However, we ran the above simulation for a few iterations, and the results were very similar to those provided here. The same was found to be true for all simulations done later in the paper.

\begin{figure}[t]
\vspace{-0.2in}
\begin{center}
{
\includegraphics[scale=0.6]{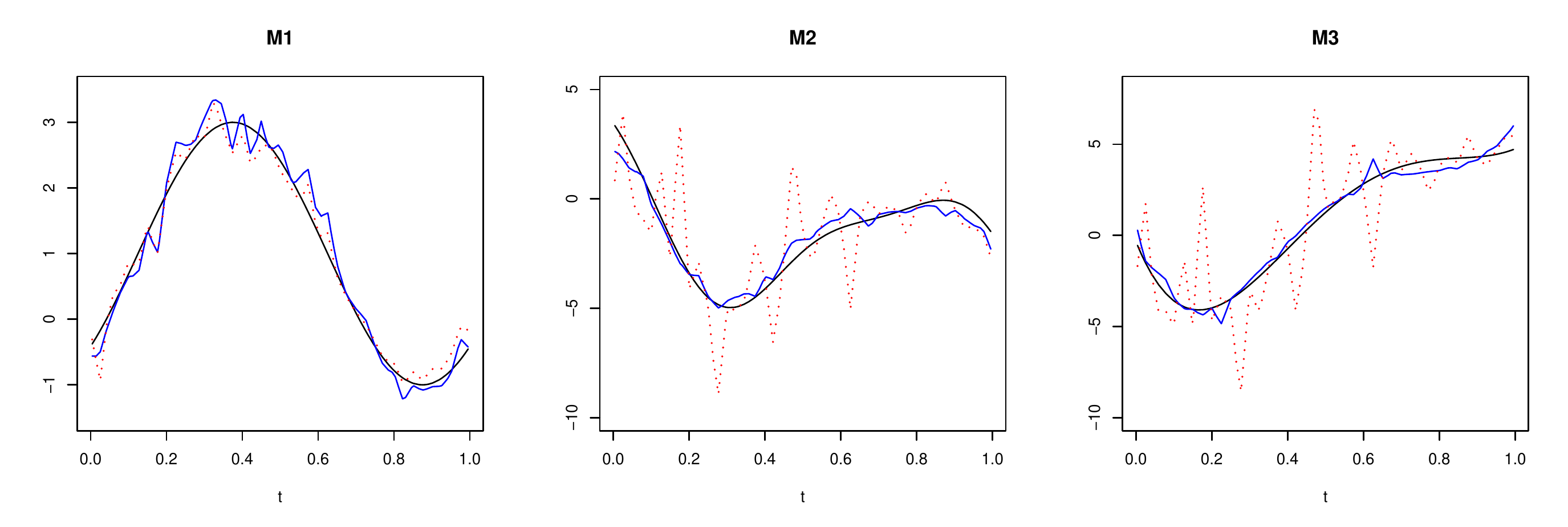}
}
\end{center}
\vspace{-0.3in}
\caption{Plots from the left to the right show the true slope function (black curves), the functional regression calibration estimator (blue curves), and the spectral truncation estimator based on erroneous observations (red dotted curves) under models (M1), (M2) and (M3), respectively, and $\delta = 0.05$.}
\label{Fig1}
\end{figure}

\begin{figure}[t]
\vspace{-0.1in}
\begin{center}
{
\includegraphics[scale=0.6]{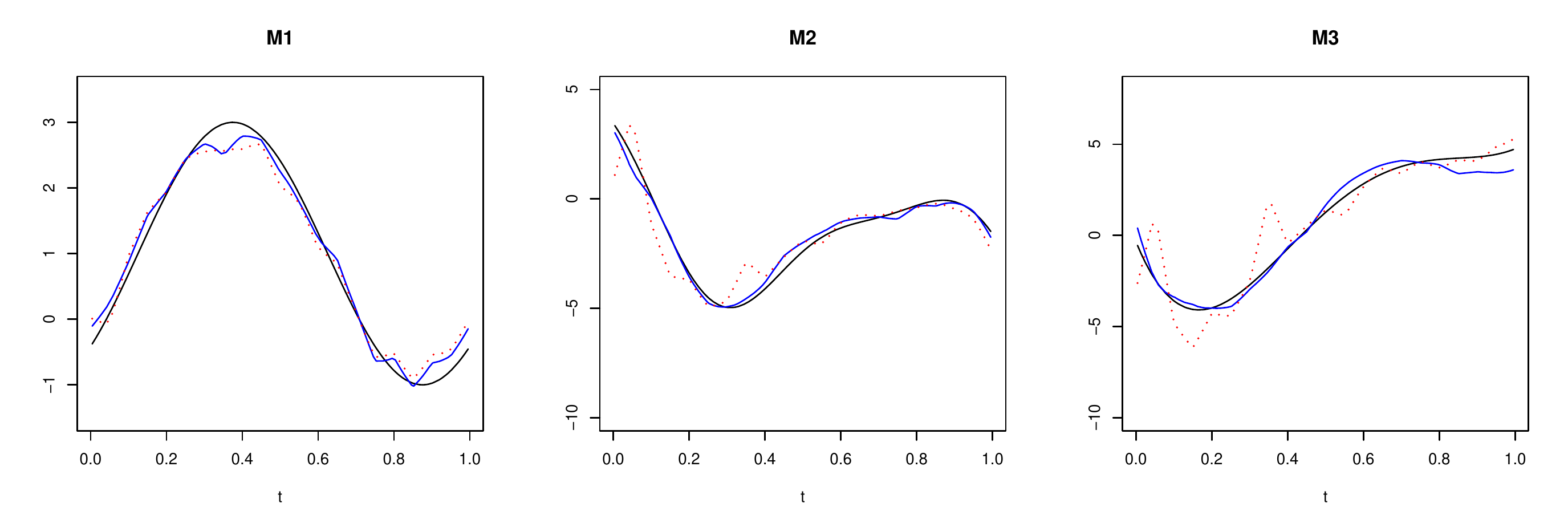}
}
\end{center}
\vspace{-0.3in}
\caption{Plots from the left to the right show the true slope function (black curves), the functional regression calibration estimator (blue curves), and the spectral truncation estimator based on erroneous observations (red dotted curves) under models (M1), (M2) and (M3), respectively, and $\delta = 0.1$.}
\label{Fig2}
\end{figure}

\indent The plots of the true slope functions, the functional regression calibration estimates and the spectral truncation estimates based on erroneous observations are shown in Figures \ref{Fig1} and \ref{Fig2} for $\delta = 0.05$ and $\delta = 0.1$, respectively. Figures \ref{Fig1} and \ref{Fig2} show that the regression calibration estimator adequately captures the true slope parameter for all the models and clearly shows the effects of not accounting for the measurement error in the covariate. The spectral estimator performs poorly under models (M2) and (M3). It is seen that the spectral truncation estimates are smoother when $\delta = 0.1$ compared to $\delta = 0.05$. This is to be expected because of the followin reason. The process $Z$ with $\delta = 0.05$ has more terms in its Karhunen-Lo\`eve expansion compared to $\delta = 0.1$ resulting in slower decay of eigenvalues. This slower decay negatively affects the prediction error (see, e.g., \cite{YC10}) that was used to estimate the spectral cut-off, and thus results in the selection of more components in the spectral estimator. Indeed, for models (M1), (M2) and (M3), the number of components retained are $3$, $8$ and $8$, respectively, when $\delta = 0.05$, and $3$, $6$ and $7$, respectively, when $\delta = 0.1$. Since the eigenfunctions of the measurement error component $U$ has much more sharper spikes for $\delta = 0.05$ compared to when $\delta = 0.1$, these additional components give rise to spiky artefacts in the estimate of the slope parameter. These artefacts are far less pronounced in the regression calibration estimator since it corrects for the measurement error component. This is clearly seen from the plots in Figures \ref{Fig1} and \ref{Fig2}. 

\subsection{Case when the errors are truly i.i.d.}  \label{iid-errors}

In this section, we will consider the typical formulation of measurement error in the functional data analysis literature, where they are assumed to be i.i.d. over the observation grid. In this setup, we will compare the regression calibration estimator with the spectral truncation estimator by \cite{HH07} as in the previous section as well as the PACE estimator in \cite{yao2005} and the SIMEX estimator in \cite{Cai2015}. The last two estimators are designed specifically for the i.i.d. measurement error case. \\
\indent The models and simulation settings are the same as in the previous section with the exception that now we take  the measurement errors to be i.i.d. $N(0,0.25)$ variables. The plots of the true slope functions along with the estimators of the slope functions using the different methods are shown in Figure \ref{Fig-iid-errors}. It is observed that for each of the three models, the regression calibration estimator estimates the true slope function very well as is comparable to the PACE estimator which is specifically designed for i.i.d. errors. Further, under all models, the rank of the true covariate $X$ is correctly estimated using the algorithm in subsection \ref{subsec2-1}. The SIMEX estimator, which is implemented using the linear extrapolation procedure, fails under all three models. In fact, the performance of the linear extrapolation procedure was found to be the best among the linear, the non-linear and the local polynomials extrapolation procedures (see pp. 41--45 in \cite{Cai2015}). The spectral truncation approach fails dramatically under all the three models. This is unlike what we observed in the previous section. This indicates that the behaviour of the estimator of the slope (using erroneous covariates) differs according to the distribution of the error and is thus in accordance with what is known in the multivariate literature (see, e.g., p. 41 in \cite{CRSC06}). The above observations show that although the regression calibration procedure is designed for the case when the measurement error is a valid stochastic process with a $\delta$-banded covariance structure, its performance is unaffected even in the limiting case $\delta = 0$ (when the resulting error structure is no longer a valid stochastic process). A possible reason for this is that although the population covariance structure of the covariate has a discontinuity only at the diagonal (due to the presence of i.i.d. measurement errors), the empirical covariance will not have this behaviour (since the off-diagonal empirical correlations between the errors need not be exactly zero in finite samples). Thus, it is better to remove a small band along the diagonal to recover the underlying true covariance.
\begin{figure}[t]
\vspace{-0.2in}
\begin{center}
{
\includegraphics[scale=0.6]{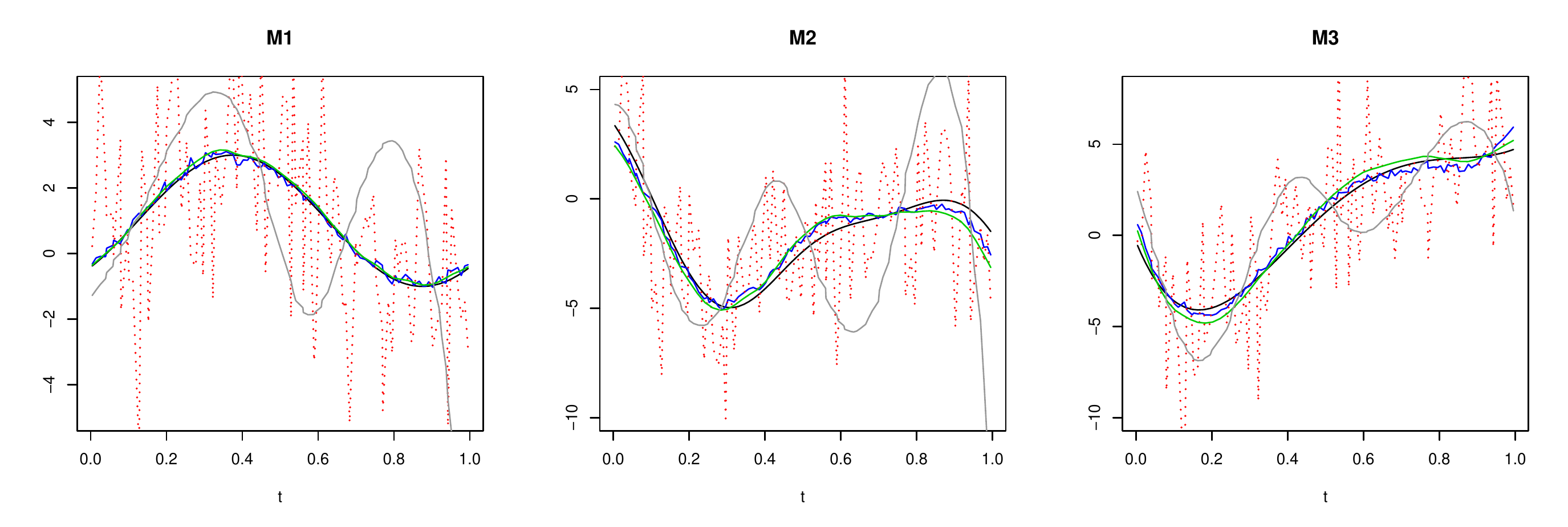}
}
\end{center}
\vspace{-0.3in}
\caption{Plots from the left to the right show the true slope function (black curves), the functional regression calibration estimator (blue curves), the spectral truncation estimator based on erroneous observations (red dotted curves), the PACE estimator (green curves) and the SIMEX estimator (grey curves) under models (M1), (M2) and (M3), respectively, and i.i.d. errors.}
\label{Fig-iid-errors}
\end{figure}

\section{Case when $r = \infty$ but $X$ is essentially finite rank}  \label{sec4}

The methodology and the theory developed thus far in the paper assume that the covariance of $X$ is exactly of finite rank. It may sometimes be the case that $X$ is truly infinite dimensional although it is essentially finite rank (finite rank covariances are dense among trace-class covariances, so any functional datum is essentially of finite rank). Specifically, $X$ has an infinite Karhunen-Lo\`eve expansion, i.e, $r=\infty$, but the eigenvalues decay very fast so that the almost all of the variability of $Y$ is explained by the first few eigenvalues.  \\
\indent There are a few problems that come up when $X$ is truly infinite dimensional although it is essentially finite rank. It is well-known in the functional linear regression literature that including higher order eigenfunctions leads to increased variability of the estimator of the slope function in finite samples due to the involvement of the inverses of these eigenvalues in the term $\widehat{\mathscr{K}}_{X}^{-}$. Also, in this case, since the decay rate is quite fast, the instability will be very significant. So, if we use the earlier method for estimating $r$ as given in Section \ref{sec2}, then we end up getting a very good estimate of the covariance operator (due to the consistency of $\widetilde{r}_{L_{*}}$ by Proposition \ref{prop1}), but a terrible estimate of the slope function. To address this problem, we have modified the method of estimating $r$ which will stabilize the estimator of the slope function. \\
\indent Note that the instability of the estimator depends on the degree of ill-posedness of the (finite rank) estimate of the covariance operator. This is measured by its condition number, i.e., the ratio of the largest to the smallest eigenvalue. The idea is to control the instability of the inverse of the estimator of the covariance operator by imposing a restriction on this condition number, which is in itself a spectral truncation approach. This in turn will impose a restriction on the possible values of the rank of the estimator of $K_{X}$. We only consider these ranks in the optimization criterion in \cite{DP16}. We implement this method as follows. 
\begin{itemize}
\item[Step (i):] As in the grid sub-sampling algorithm earlier, fix a suitable $L_{*}$ and for the $b$th iteration, randomly sub-sample a grid from the original one. Compute the empirical covariance $\widehat{K}_{W*}$ of $W$ for that grid, and compute $f_{b,L_{*}}(j) = \min |||P_{L_{*}\delta_{*}} \circ (\widehat{K}_{W*} - \Theta)|||_{F}^{2}$ over all $L_{*} \times L_{*}$ positive definite matrices $\Theta$ of rank $j$ for each $j = 1,2,\ldots,(L_{*}/4 + 1)$. Here $P_{L_{*}\delta_{*}}$ is as defined in Section \ref{subsec2-1}.

\item[Step (ii):] Repeat Step (i) $B$ times, and for each $j = 1,2,\ldots,(L_{*}/4 + 1)$, the median of $\{f_{b,L_{*}}(j), 1 \leq b \leq B\}$ is recorded. We denote these by $\widetilde{f}_{L_{*}}(j)$'s. 

\item[Step (iii):] Using the full grid, compute consistent estimates of the condition numbers of the rank $j$ estimator of the covariance operator of $Y$ for each $j = 1,2,\ldots,(L_{*}/4 + 1)$. Denote these by $\widetilde{a}_{j}$'s.
 
\item[Step (iv):] Estimate the essential rank by $\widehat{r}_{ess} = \max\{j :  \widetilde{f}_{L_{*}}(j) \leq c_{1}, \widetilde{a}_{j} \leq c_{2}\}$ for pre-determined values of $c_{1} > 0$ and $c_{2} > 0$.

\item[Step (v):] Repeat Step 4* in Section \ref{subsec2-1} with $\widehat{r}_{ess}$ instead of $\widetilde{r}_{L_{*}}$ to obtain an estimator $\widehat{\mathscr{K}}_{X,ess}$ of $\mathscr{K}_{X}$.
\end{itemize}
With these notations, the regression calibration estimator is given by $\widehat{\beta}_{rc,ess} = \widehat{\mathscr{K}}_{X,ess}^{-}\widehat{C}_{y,W}$. Clearly, $\widehat{\beta}_{rc,ess}$ will not be consistent for $\beta$ unless $\widehat{r}_{ess}$ converges to $r$. In order to achieve this, one would have to suitably decrease $c_{1}$ and increase $c_{2}$ with the sample size. Also, the rate of decay of $c_{1}$ and growth of $c_{2}$ will depend on the rate of decay of the eigenvalues. However, at this stage, we do not have a precise theory. \\ 
\indent Throughout the simulations, we have chosen $c_{2} = 50$ in accordance to a thumb-rule in multivariate statistics about how large should a condition number be for the matrix to be ill-conditioned (see, e.g., \cite{Hock03}). Also, as in Section \ref{sec3}, we have chosen $c_{1} = 0.01L_{*}^{2}$. We have conducted simulations to investigate the performance of the estimators in this setting and have chosen the following modifications of the models considered earlier for $X$. 
 \\
(M4) $r = 20$, $\eta_{1} \equiv 1, \eta_{2j}(t) = \sqrt{2}\sin(2{\pi}jt), \eta_{2j+1}(t) = \sqrt{2}\cos(2{\pi}jt)$ for $t \in [0,1]$ and $j \geq 2$. \\
(M5) $r = 20$, $\eta_{j}(t)$'s are the normalized functions in the Gram-Schmidt orthogonalization of the functions $f_{1}, \ldots, f_{5}$ in model (M2) along with the functions $f_{j}(t) = t^{j-3}, j=6,7,\ldots,r$, for $t \in [0,1]$. \\
(M6) $r = 20$, $\eta_{j}(t)$'s are the normalized functions in the Gram-Schmidt orthogonalization of the functions $f_{1}, \ldots, f_{5}$ in model (M3) along with the functions $f_{j}(t) = t^{j-1}, j=6,7,\ldots,r$, for $t \in [0,1]$. \\
The value $r = 20$ is chosen to mimic the potentially infinite rank case. For model (M4), $\lambda_{1}$ to $\lambda_{3}$ was equally spaces between $1.5$ and $0.3$ similar to model (M1), and $\lambda_{j} = {0.1421}(j-3)^{-4}, j=4,5,\ldots,r$. For models (M5) and (M6), $\lambda_{1}$ to $\lambda_{5}$ was equally spaces between $1.5$ and $0.3$ similar to models (M2) and (M3), and $\lambda_{j} = {0.2368}(j-5)^{-4}, j=6,7,\ldots,r$. The choices of the remaining $\lambda_{j}$'s are made to ensure that the first three (respectively, five) eigenvalues in model (M4) (respectively, (M5) and (M6)) explain $95\%$ to the total variation of $X$. So, for model (M4), the essential rank can be taken to be $3$, while it can be taken to be $5$ for models (M5) and (M6). For each of these models, we have considered only one model for $U$, namely, that corresponding to $\delta = 0.1$ in the earlier simulation. The slope functions chosen are $\beta = \eta_{1} + \eta_{2} - \eta_{3} - \eta_{4} + 0.5\eta_{5}$ for model (M4), $\beta = -0.4\eta_{1} + 2\eta_{2} - \eta_{3} + \eta_{4} - 0.7\eta_{5} + 0.5\eta_{6} - 0.3\eta_{7}$ for model (M5), and $\beta = 0.7\eta_{1} + 3\eta_{2} - \eta_{4} + 0.5\eta_{5} + 0.3\eta_{7}$ for model (M6). These choices of the slope parameter have been made so that the highest order eigenfunction in it corresponds to an eigenvalue that is smaller than $0.01$ times the largest eigenvalue, i.e., the condition number exceeds $100$. This will enable us to investigate whether the regression calibration approach described in Steps (i)-(v) above can adapt to the general situation or not. We did not include other higher order eigenfunctions in the choice of $\beta$ because those components have almost negligible contribution to the variability of $X$. So, including them would essentially amount to adding terms which are almost orthogonal to the true covariate $X$, and hence would result in a problem similar to the identifiability issue in functional linear regression when the slope has components orthogonal to the covariate. The sample size and the distribution of the error component in the linear regression are the same as earlier. \\
\begin{figure}[!t]
\vspace{-0.2in}
\begin{center}
{
\includegraphics[scale=0.6]{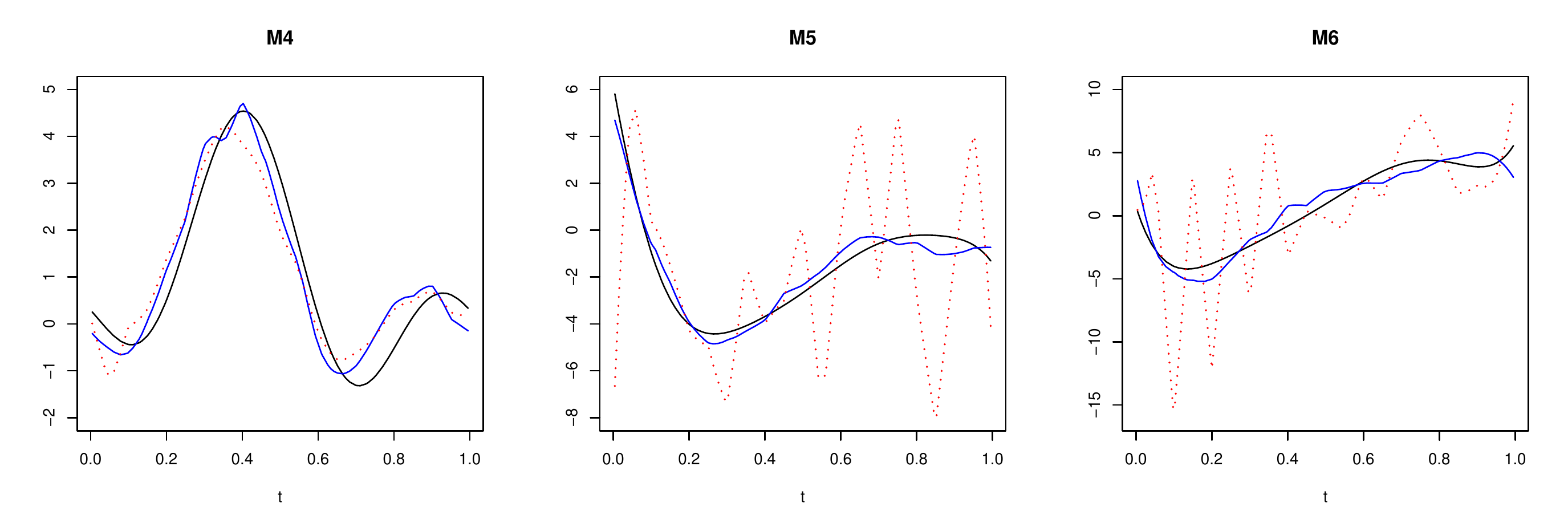}
}
\end{center}
\vspace{-0.3in}
\caption{Plots from the left to the right show the true slope function (black curve), the functional regression calibration estimator (blue curve), and the spectral truncation estimator based on erroneous observations (red dotted curves) under models (M4), (M5) and (M6), respectively, and $\delta = 0.1$.}
\label{Fig3}
\end{figure}
\indent The plots of the true slope functions, the regression calibration estimates and the spectral truncation estimates based on erroneous observations are shown in Figure \ref{Fig3}. It is seen that even in these potentially infinite rank situations, the regression calibration estimator is a reasonably good estimator of the true slope function. The only loss is that for the given sample size, the estimates of the essential rank are $4$, $6$ and $6$ for models (M4), (M5) and (M6), respectively. So, the highest order eigenfunction in the true slope parameters for each of these model is missed. But this in turn stabilizes the estimator and demonstrates the adaptability of the estimator even in the presence of higher order eigenfunctions in the slope parameter. On the other hand, the spectral truncation estimator completely fails for models (M5) and (M6). This is because the spectral cut-off has been grossly over-estimated for each of these models, which implies that higher order eigenfunctions make the estimator highly unstable.

\section{Data Analysis} \label{sec5}

We demonstrate the performance of the regression calibration estimator on the 'gait' dataset that can be obtained in the \textit{fda} package in the \texttt{R} software (see also \cite{RS05}). The dataset contains the hip angles and the knee angles of $39$ boys, and the observations are taken over $20$ different time points during a gait cycle. The aim of the analysis is to fit a function-on-function linear regression with the response as the knee angle and the covariate as the hip angle. Such data typically have measurement errors in both the response and the covariate. We will assume that the error in the covariate is of the white noise type. So, to correct for the same, we apply the regression calibration method by assuming that the bandwidth of the error process is approximately zero. As in the simulations in Section \ref{sec3}, we also compute the spectral truncation estimator by choosing the spectral cut-off in the same way as described in that section. The estimation procedure for the regression calibration estimator is chosen to be the one for the potentially infinite rank scenario described in Section \ref{sec4}. We obtain the essential rank to be $5$, and the spectral cut-off is found to be $4$. The plots of the estimators of the slope parameter using the two estimators is given in Figure \ref{Fig5}. It is observed that the regression calibration approach has `sharpened' the estimate, i.e., the areas of positive and negative impact are much more prominent than in the spectral truncation approach. The plots of the observed $W_{i}$'s, the estimated $X_{i}$'s, and the estimated pointwise variances of the measurement error process are shown in Figure \ref{Fig6}. The latter estimates are obtained using the method in \cite{DP16}. It may be inferred from the plots of the marginal error variances that making an assumption of homoscedastic errors is not justified for this dataset. To further measure the usefulness of the estimates obtained, we used the estimates of the $X_{i}$'s as the covariates and predicted the corresponding knee angle curves using the regression calibration estimator. The $R^{2}$-coefficient of prediction for this case was found to be $54.2\%$, while the same using the spectral truncation estimator and the observed $X_{i}$'s as covariates was found to be $50.3\%$. This indicates a marked  improvement in the fit of the linear regression by correcting for the measurement error in the covariate. \\
\begin{figure}[t]
\begin{center}
{
\includegraphics[trim = 3cm 8cm 2.8cm 8cm, scale=0.4]{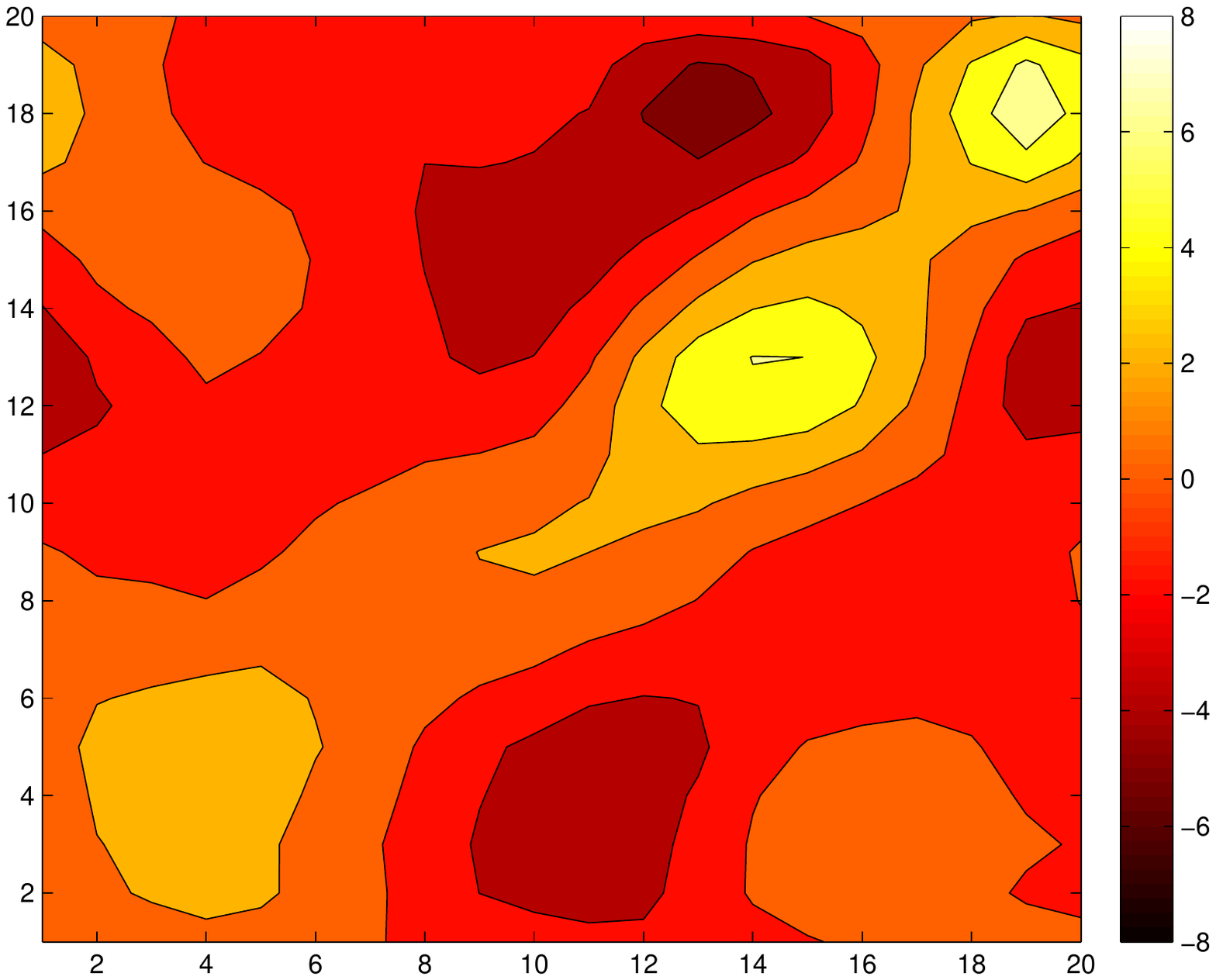}
\qquad
\includegraphics[trim = 3cm 8cm 2.8cm 8cm, scale=0.4]{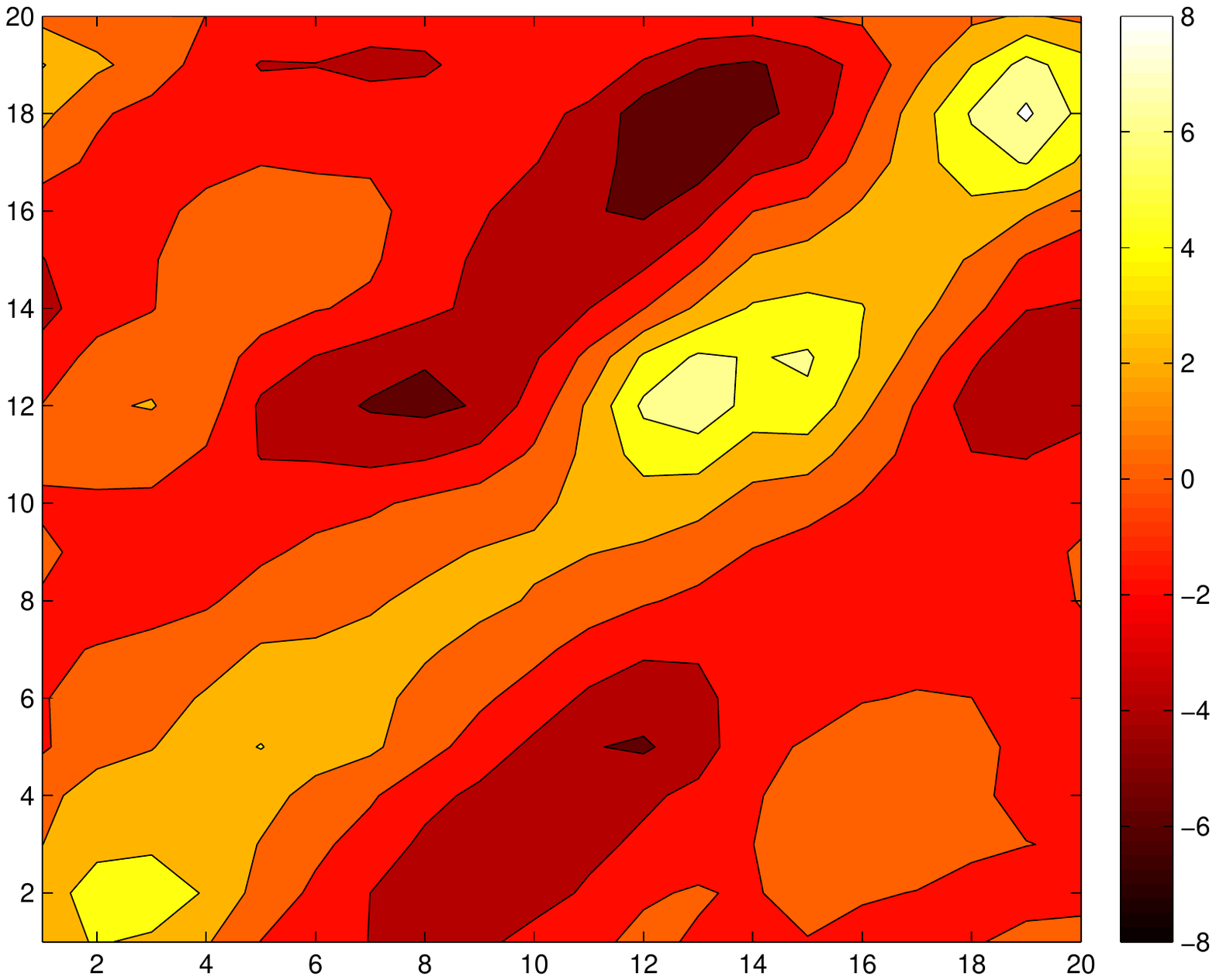}
}
\end{center}
\caption{Plots the spectral truncation estimate (left) and the regression calibration estimate (right) of the slope parameter for the gait data.}
\label{Fig5}
\end{figure}
\begin{figure}[t]
\vspace{-0.1in}
\begin{center}
{
\includegraphics[scale=0.6]{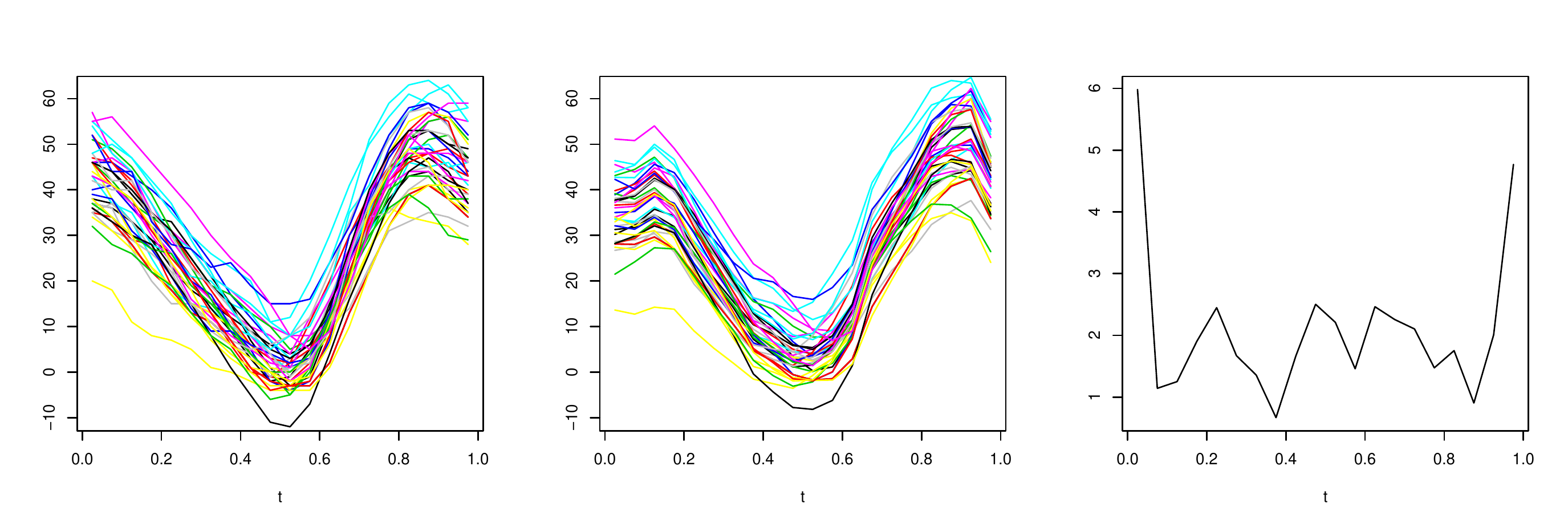}
}
\end{center}
\vspace{-0.3in}
\caption{Plots of the hip angle curves with measurement errors (left), the estimated true hip angle curves corrected for measurement error (middle), and the estimated pointwise variances of the measurement error process (right).}
\label{Fig6}
\end{figure}

\bibliographystyle{apalike}
\bibliography{biblio1}

\begin{thebibliography}{}

\bibitem[Bosq, 2000]{Bosq00}
Bosq, D. (2000).
\newblock {\em Linear processes in function spaces}, volume 149 of {\em Lecture
  Notes in Statistics}.
\newblock Springer-Verlag, New York.
\newblock Theory and applications.

\bibitem[Cai, 2015]{Cai2015}
Cai, X. (2015).
\newblock {\em Methods for handling measurement error and sources of variation
  in functional data models}.
\newblock PhD thesis, Columbia University.
\newblock http://dx.doi.org/10.7916/D8M907CJ.

\bibitem[Cardot et~al., 2007]{CCKS07}
Cardot, H., Crambes, C., Kneip, A., and Sarda, P. (2007).
\newblock Smoothing splines estimators in functional linear regression with
  errors-in-variables.
\newblock {\em Comput. Statist. Data Anal.}, 51(10):4832--4848.

\bibitem[Carroll et~al., 2006]{CRSC06}
Carroll, R.~J., Ruppert, D., Stefanski, L.~A., and Crainiceanu, C.~M. (2006).
\newblock {\em Measurement error in nonlinear models}, volume 105 of {\em
  Monographs on Statistics and Applied Probability}.
\newblock Chapman \& Hall/CRC, Boca Raton, FL, second edition.
\newblock A modern perspective.

\bibitem[Cook and Stefanski, 1994]{CS94}
Cook, J.~R. and Stefanski, L.~A. (1994).
\newblock Simulation-extrapolation estimation in parametric measurement error
  models.
\newblock {\em J. Amer. Statist. Assoc.}, 89(428):1314--1328.

\bibitem[Crambes et~al., 2009]{CKS09}
Crambes, C., Kneip, A., and Sarda, P. (2009).
\newblock Smoothing splines estimators for functional linear regression.
\newblock {\em Ann. Statist.}, 37(1):35--72.

\bibitem[Crambes and Mas, 2013]{CM13}
Crambes, C. and Mas, A. (2013).
\newblock Asymptotics of prediction in functional linear regression with
  functional outputs.
\newblock {\em Bernoulli}, 19(5B):2627--2651.

\bibitem[Descary and Panaretos, 2017]{DP16}
Descary, M.-H. and Panaretos, V.~M. (2017).
\newblock {Functional Data Analysis by Matrix Completion}.
\newblock To appear in Ann. Statist., https://arxiv.org/pdf/1609.00834v1.pdf.

\bibitem[Di et~al., 2009]{DCCP09}
Di, C.-Z., Crainiceanu, C.~M., Caffo, B.~S., and Punjabi, N.~M. (2009).
\newblock Multilevel functional principal component analysis.
\newblock {\em Ann. Appl. Stat.}, 3(1):458--488.

\bibitem[Fuller, 1987]{Full87}
Fuller, W.~A. (1987).
\newblock {\em Measurement error models}.
\newblock Wiley Series in Probability and Mathematical Statistics: Probability
  and Mathematical Statistics. John Wiley \& Sons, Inc., New York.

\bibitem[Goldsmith et~al., 2011a]{GBCCR11}
Goldsmith, J., Bobb, J., Crainiceanu, C.~M., Caffo, B., and Reich, D. (2011a).
\newblock Penalized functional regression.
\newblock {\em J. Comput. Graph. Statist.}, 20(4):830--851.

\bibitem[Goldsmith et~al., 2011b]{GWC11}
Goldsmith, J., Wand, M.~P., and Crainiceanu, C. (2011b).
\newblock Functional regression via variation {B}ayes.
\newblock {\em Electron. J. Stat.}, 5:572--602.

\bibitem[Golub and Van~Loan, 1980]{GVL80}
Golub, G.~H. and Van~Loan, C.~F. (1980).
\newblock An analysis of the total least squares problem.
\newblock {\em SIAM J. Numer. Anal.}, 17(6):883--893.

\bibitem[Hall and Horowitz, 2007]{HH07}
Hall, P. and Horowitz, J.~L. (2007).
\newblock Methodology and convergence rates for functional linear regression.
\newblock {\em Ann. Statist.}, 35(1):70--91.

\bibitem[Hall et~al., 2006]{hall2006properties}
Hall, P., M\"uller, H.-G., and Wang, J.-L. (2006).
\newblock Properties of principal component methods for functional and
  longitudinal data analysis.
\newblock {\em Ann. Statist.}, 34(3):1493--1517.

\bibitem[Hocking, 2003]{Hock03}
Hocking, R.~R. (2003).
\newblock {\em Methods and applications of linear models}.
\newblock Wiley Series in Probability and Statistics. Wiley-Interscience [John
  Wiley \& Sons], Hoboken, NJ, second edition.

\bibitem[Hsing and Eubank, 2015]{HE15}
Hsing, T. and Eubank, R. (2015).
\newblock {\em Theoretical foundations of functional data analysis, with an
  introduction to linear operators}.
\newblock Wiley Series in Probability and Statistics. John Wiley \& Sons, Ltd.,
  Chichester.

\bibitem[Ivanescu et~al., 2015]{ISSG15}
Ivanescu, A.~E., Staicu, A.-M., Scheipl, F., and Greven, S. (2015).
\newblock Penalized function-on-function regression.
\newblock {\em Comput. Statist.}, 30(2):539--568.

\bibitem[James, 2002]{Jame02}
James, G.~M. (2002).
\newblock Generalized linear models with functional predictors.
\newblock {\em J. R. Stat. Soc. Ser. B Stat. Methodol.}, 64(3):411--432.

\bibitem[Marx and Eilers, 1999]{marx1999}
Marx, B.~D. and Eilers, P.~H. (1999).
\newblock Generalized linear regression on sampled signals and curves: a
  p-spline approach.
\newblock {\em Technometrics}, 41(1):1--13.

\bibitem[Morris, 2015]{Morr15}
Morris, J.~S. (2015).
\newblock Functional regression.
\newblock {\em Annual Review of Statistics and Its Application}, 2(1):321--359.

\bibitem[Opsomer et~al., 2001]{OWY01}
Opsomer, J., Wang, Y., and Yang, Y. (2001).
\newblock Nonparametric regression with correlated errors.
\newblock {\em Statist. Sci.}, 16(2):134--153.

\bibitem[Ramsay and Silverman, 2005]{RS05}
Ramsay, J.~O. and Silverman, B.~W. (2005).
\newblock {\em Functional data analysis}.
\newblock Springer Series in Statistics. Springer, New York, second edition.

\bibitem[Reiss et~al., 2016]{RGSO16}
Reiss, P.~T., Goldsmith, J., Shang, H.~L., and Ogden, R.~T. (2016).
\newblock Methods for scalar-on-function regression.
\newblock {\em International Statistical Review}.
\newblock http://dx.doi.org/10.1111/insr.12163.

\bibitem[Reiss and Ogden, 2007]{RO07}
Reiss, P.~T. and Ogden, R.~T. (2007).
\newblock Functional principal component regression and functional partial
  least squares.
\newblock {\em J. Amer. Statist. Assoc.}, 102(479):984--996.

\bibitem[van~der Vaart and Wellner, 1996]{VW96}
van~der Vaart, A.~W. and Wellner, J.~A. (1996).
\newblock {\em Weak convergence and empirical processes}.
\newblock Springer Series in Statistics. Springer-Verlag, New York.
\newblock With applications to statistics.

\bibitem[Van~Huffel and Vandewalle, 1991]{VHV91}
Van~Huffel, S. and Vandewalle, J. (1991).
\newblock {\em The total least squares problem}, volume~9 of {\em Frontiers in
  Applied Mathematics}.
\newblock Society for Industrial and Applied Mathematics (SIAM), Philadelphia,
  PA.
\newblock Computational aspects and analysis, With a foreword by Gene H. Golub.

\bibitem[Yao and M\"uller, 2010]{YM10}
Yao, F. and M\"uller, H.-G. (2010).
\newblock Functional quadratic regression.
\newblock {\em Biometrika}, 97(1):49--64.

\bibitem[Yao et~al., 2005a]{YMW05}
Yao, F., M\"uller, H.-G., and Wang, J.-L. (2005a).
\newblock Functional data analysis for sparse longitudinal data.
\newblock {\em J. Amer. Statist. Assoc.}, 100(470):577--590.

\bibitem[Yao et~al., 2005b]{yao2005}
Yao, F., Muller, H.-G., and Wang, J.-L. (2005b).
\newblock Functional linear regression analysis for longitudinal data.
\newblock {\em Ann. Statist.}, 33(6):2873--2903.

\bibitem[Yuan and Cai, 2010]{YC10}
Yuan, M. and Cai, T.~T. (2010).
\newblock A reproducing kernel {H}ilbert space approach to functional linear
  regression.
\newblock {\em Ann. Statist.}, 38(6):3412--3444.

\end{thebibliography}

\end{document}